%% file: main.tex
\DocumentMetadata{}
\documentclass[]{lipics-v2021}

\input{packages}
\input{macro}
\EventLogo{lipics-logo-bw}
\title{Efficient Dynamic Algorithms to Predict Short Races}

\author{Minjian Zhang}{University of Illinois Urbana-Champaign, USA}{minjian2@illinois.edu}{}{}
\author{Mahesh Viswanathan}{University of Illinois Urbana-Champaign, USA}{vmahesh@illinois.edu}{}{}

\authorrunning{M. Zhang and M. Viswanathan}

\ccsdesc[500]{Software and its engineering~Dynamic analysis}
\keywords{concurrency, prediction, data race, happens-before}
\Copyright{Anonymous Author(s)}
\begin{document}
\hideLIPIcs
\nolinenumbers
\maketitle
\input{abstract}

\input{intro}
\input{prelim}

\input{short_monitoring}
\input{windowhb}
\input{syncPhb}

\input{evaluation}

\input{relatedwork}

\input{conclusions}
\clearpage

\bibliographystyle{splncs04}
\bibliography{references}
\clearpage
\appendix

\input{appendix}

\end{document}

%% file: packages.tex
\usepackage{comment}
\usepackage{stmaryrd}
\usepackage{mathrsfs}
\usepackage{dsfont}
\usepackage{multicol}
\usepackage{doi}
\usepackage{multirow}
\usepackage{needspace}

\usepackage{tikz}
\usetikzlibrary{arrows.meta,positioning,fit,calc}

\usepackage[font=scriptsize]{caption}

\usepackage{svrsymbols}
\usepackage{xcolor}
\usepackage{algorithmic}
\usepackage{textcomp}
\usepackage[inline]{enumitem}
\usepackage{subcaption}
\usepackage{framed}
\usepackage{csquotes}

\usepackage[inline]{enumitem}

\usepackage{booktabs}

\usepackage[ruled,vlined,resetcount,linesnumbered]{algorithm2e}

\usepackage{hyperref}

\usepackage{comment}
\usepackage{todonotes}
\usepackage{longtable}

\usepackage{tikz}
\usetikzlibrary{arrows,automata,shapes,decorations,decorations.markings,calc, matrix,decorations.pathmorphing, patterns,backgrounds,shapes.misc,arrows.meta}



%% file: macro.tex
\newcommand{\figlabel}[1]{\label{fig:#1}}
\newcommand{\figref}[1]{Fig.~\ref{fig:#1}}
\newcommand{\seclabel}[1]{\label{sec:#1}}
\newcommand{\secref}[1]{Section~\ref{sec:#1}}

\newcommand{\probref}[1]{Problem~\ref{prob:#1}}
\newcommand{\thmlabel}[1]{\label{thm:#1}}
\newcommand{\thmref}[1]{Theorem~\ref{thm:#1}}
\newcommand{\proplabel}[1]{\label{prop:#1}}
\newcommand{\propref}[1]{Proposition~\ref{prop:#1}}

\newcommand{\tablabel}[1]{\label{tab:#1}}
\newcommand{\tabref}[1]{Table~\ref{tab:#1}}
\newcommand{\applabel}[1]{\label{app:#1}}

\newcommand{\algolabel}[1]{\label{alg:#1}}
\newcommand{\algoref}[1]{Algorithm~\ref{alg:#1}}

\newcommand{\rmv}[1] {}

\newcommand{\s}[1] {\mathsf{#1}}

\newcommand{\threads}[1]{\mathsf{Threads}_{#1}}
\newcommand{\locks}[1]{\mathsf{Locks}_{#1}}
\newcommand{\vars}[1]{\mathsf{Mem}_{#1}}

\newcommand{\match}[1]{\mathsf{match}_{#1}}
\newcommand{\TLClosure}[1]{\mathsf{TLClosure}_{#1}}

\newcommand{\SPIClosure}[1]{\mathsf{SPClosure}_{#1}}
\newcommand{\SPIdeal}[1]{\mathsf{SPIdeal}_{#1}}

\newcommand{\prev}[1]{\mathsf{prev}_{#1}}

\newcommand{\acq}{\mathtt{acq}}
\newcommand{\rel}{\mathtt{rel}}
\newcommand{\rd}{\mathtt{r}}
\newcommand{\wt}{\mathtt{w}}

\newcommand{\lk}{\ell}

\newcommand{\tr}{\sigma}

\newcommand{\ThreadOf}[1]{\mathsf{thr}(#1)}
\newcommand{\ind}{\mathsf{index}}

\newcommand{\racetype} {\s{R}}
\newcommand{\winsz} {\s{w}}
\newcommand{\trsz} {\s{N}}
\newcommand{\numthr} {\s{T}}
\newcommand{\numlk} {\s{L}}
\newcommand{\numvar} {\s{V}}
\newcommand{\numacq}{\s{A}}

\newcommand{\OpOf}[1]{\mathsf{op}(#1)}

\newcommand{\substr}[3] {#1[#2,#3]}

\newcommand{\setpred}[2]{\{#1 \,|\, #2\}}

\DeclareFontFamily{U}{mathb}{\hyphenchar\font45}
\DeclareFontShape{U}{mathb}{m}{n}{
<5> <6> <7> <8> <9> <10> gen * mathb
<10.95> mathb10 <12> <14.4> <17.28> <20.74> <24.88> mathb12
}{}
\DeclareSymbolFont{mathb}{U}{mathb}{m}{n}
\DeclareMathSymbol{\sqsubsetneq}    {3}{mathb}{"8A}

\newcommand{\acrtr}{\textsf{tr}\xspace}
\newcommand{\acrto}{\textsf{TO}\xspace}

\newcommand{\acrhb}{\textsf{HB}\xspace}

\newcommand{\ord}[2]{\leq^{#1}_{\mathsf{#2}}}

\newcommand{\strictord}[2]{<^{#1}_{\mathsf{#2}}}
\newcommand{\access}{\textsf{Acc}\xspace}
\newcommand{\cs}{\textsf{CS}\xspace}

\newcommand{\trord}[1]{\ord{#1}{\acrtr}}
\newcommand{\stricttrord}[1]{\strictord{#1}{\acrtr}}
\newcommand{\tho}[1]{\ord{#1}{\acrto}}
\newcommand{\stricttho}[1]{<^{#1}_{\acrto}}

\newcommand{\hb}[1]{\ord{#1}{\acrhb}}

\newcommand{\lw}[1]{\mathsf{lw}_{#1}}


\input{pseudocode_macros}

\renewcommand{\emptyset}{\varnothing}

\newcommand{\nats}{\mathbb{N}}
\newcommand{\rapid}{\textsf{RAPID}\xspace}

\newcommand{\fasttrack}{\textsc{FastTrack}\xspace}
\newcommand{\shortfasttrack}{\textsc{ShortFastTrack}\xspace}

\newcommand{\syncp}{\textsc{SyncP}\xspace}
\newcommand{\shortsyncp}{\textsc{ShortSyncP}\xspace}

\newcommand{\tsan}{\textsc{ThreadSanitizer}\xspace}

\newcommand{\ft}{\mathsf{FT}}	
\newcommand{\ltime}{\mathsf{L}}
\newcommand{\ctime}{\mathsf{C}}
\newcommand{\ltimeft}{{\ltime}_{\ft}}
\newcommand{\ctimeft}{{\ctime}_{\ft}}

\newcommand{\cle}{\sqsubseteq}
\newcommand{\CFT}{\mathbb{C}}
\newcommand{\window}{\mathbb{W}}

\newcommand{\RFT}{\mathbb{C}^{\rd}}

\newcommand{\epch}{\mathsf{e}}

\newcommand{\np} {\text{NP}}
\newcommand{\wone} {\text{W}[1]}

\newcommand{\activate}{\ensuremath{\mathsf{activate}}}
\newcommand{\collect}{\ensuremath{\mathsf{collect}}}
\newcommand{\spill}{\ensuremath{\mathsf{spill}}}
\newcommand{\replay}{\ensuremath{\mathsf{replay}}}
\newcommand{\Open}{\mathsf{Open}}
\newcommand{\spn}{\mathsf{span}}

\newcommand{\spanof}[2]{\spn(#1,#2)}

\newcommand{\myparagraph}[1] {\vspace*{0.1in}\noindent{\bf #1}\ }

\providecommand{\bot}{\perp}

\providecommand{\TLClosure}[1]{\mathsf{TLClosure}_{#1}}
\providecommand{\SPIClosure}[1]{\mathsf{SPClosure}_{#1}}
\providecommand{\SPIdeal}[1]{\mathsf{SPIdeal}_{#1}}
\providecommand{\prev}[1]{\mathsf{prev}_{#1}}



%% file: pseudocode_macros.tex
\SetKwProg{myfun}{function}{}{}
\SetKwProg{myhandler}{handler}{}{}
\SetKwProg{pohandler}{post-handler}{}{}

\SetKwFunction{init}{initialize}
\SetKwFunction{preaccess}{preHandler}

\SetKwFunction{resetClks}{resetClocksToNull}
\SetKwFunction{checkAndUpdate}{checkRaceAndUpdateMetadata}
\SetKwFunction{shouldSample}{shouldSample}
\SetKwFunction{newSample}{startNewSample}
\SetKwFunction{checkEndSample}{postHandler}

\SetKwFunction{endSample}{shouldEndSample}
\SetKwFunction{inSample}{inLocalSample}
\SetKwFunction{decidesampling}{decideSampling}
\SetKwFunction{skipevent}{returnWithoutInvokeEventHandler}

\SetKwFunction{merge}{mergeSubtraces}
\SetKwFunction{fixpoint}{ComputeSPIdeal}
\SetKwFunction{maxLB}{maxLowerBound}
\SetKwFunction{checkRace}{checkRace}
\SetKwFunction{rdhandler}{read}
\SetKwFunction{sdwindow}{windowSliding}
\SetKwFunction{wrapper}{windowWrapper}
\SetKwFunction{restore}{restoreAcquireInfo}
\SetKwFunction{store}{storeAcquireInfo}
\SetKwFunction{allocate}{allocation}
\SetKwFunction{deallocate}{deallocation}

\SetKwFunction{wthandler}{write}
\SetKwFunction{acqhandler}{acquire}
\SetKwFunction{relhandler}{release}
\SetKwInput{Params}{Parameters}
\SetKwInput{Input}{Input}
\SetKwInOut{Output}{Output}
\SetKw{Let}{let}
\SetKw{Break}{break}
\SetKw{NOT}{not}
\SetKw{declare}{declare}
\SetKw{check}{check}
\SetKw{exit}{exit}
\SetKwFor{Foreach}{for each}{}{}%
\DontPrintSemicolon
\newcommand{\mx}{\sqcup}

%% file: abstract.tex
\begin{abstract}
We introduce and study the problem of detecting short races in an observed trace. Specifically, for a race type $\racetype$, given a trace $\tr$ and window size $\winsz$, the task is to determine whether there exists $\racetype$-race $(e_1,e_2)$ in $\tr$ such that the subtrace starting with $e_1$ and ending with $e_2$ contains at most $\winsz$ events. We present a monitoring framework for short-race prediction and instantiate the framework for happens-before and sync-preserving races, yielding efficient detection algorithms. Our happens-before algorithm runs in the same time as {\fasttrack} but uses space that scales with $\log \winsz$ as opposed to $\log |\tr|$. For sync-preserving races, our algorithm runs faster and consumes significantly less space than {\syncp}. Our experiments validate the effectiveness of these short race detection algorithms: they run more efficiently, use less memory, and detect significantly more races under the same budget, offering a reasonable balance between resource usage and predictive power.
\end{abstract}

%% file: intro.tex
Developing reliable concurrent programs is a complex task, making formal reasoning principles essential for their analysis. The difficulty in designing and verifying concurrent systems stems from the inherent nondeterminism in program behavior caused by inter-process communication and scheduling. Managing this nondeterminism during development is challenging, often resulting in error-prone software. Furthermore, concurrency-related bugs are notoriously difficult to reproduce manually, highlighting the need for automated detection techniques to improve software development productivity.

A data race (or simply a race) occurs when a thread in a multi-threaded program accesses a shared memory location while another thread modifies it without proper synchronization. Data races often signal serious program defects~\cite{lpsz08} and have been associated with data corruption, compilation failures~\cite{Narayanasamy2007,boehmbenign2011,racemob}, and critical system failures~\cite{SoftwareErrors2009,evil2012}. As a result, extensive research has focused on detecting and preventing data races in multi-threaded programs.

Although static techniques for data race detection exist, dynamic race detection methods are more widely used due to their superior scalability and precision. The goal of dynamic race detection is to analyze an observed program trace to determine whether it contains evidence of a data race in the program that generated it. A key objective of this approach is to be \emph{predictive} --- not only identifying data races in the observed trace but also uncovering races in potential \emph{alternate} executions that the program could exhibit. This predictive capability improves the practical utility of dynamic race detection. However, the requirement to be predictive also makes the problem computationally challenging --- dynamic race detection is $\np$-complete and $\wone$-hard, rendering it fixed-parameter intractable~\cite{Mathur20,kmp21}.

The challenge of intractability has driven research into specialized types of predictive data races~\cite{lamport1978time,maxcausalmodels,cp2012,wcp2017,racechaser,shb2018,Roemer18,PavlogiannisPOPL20,Roemer20,SyncP2021,Shi2024}. These approaches aim to balance predictive power with computational tractability. Race definitions typically rely on either partial orders~\cite{lamport1978time,cp2012,wcp2017,shb2018} or constraints on the types of alternate executions considered~\cite{maxcausalmodels,PavlogiannisPOPL20,SyncP2021,Shi2024}.

Even if a race type is tractable, meaning it has a polynomial-time algorithm, the practical overhead can still be prohibitively high, limiting the size of traces that can be analyzed. Even for race detection algorithms that run in a single pass with linear-time streaming, lower bounds indicate that they require memory proportional to the trace size~\cite{wcp2017,SyncP2021}, which can be substantial. To address this, \emph{windowing}~\cite{cp2012,maxcausalmodels} has been proposed as a technique to reduce overhead in dynamic race detection. This approach divides the trace into subtraces of a fixed length (say) $\winsz$, and the algorithm searches for races only within each subtrace. By limiting analysis to bounded-length segments, windowing helps control computational overhead, making it feasible to analyze larger traces.
\textcolor{black}{Windowing imposes a key limitation on predictive analysis. Certifying a predictive race typically depends on context outside the two conflicting events and the events in between. If some key events in the context are outside the window, no algorithm can soundly certify the race, even though the two racy events themselves are in view.}
Therefore we focus on a problem that generalizes windowing. Rather than limiting race detection to predefined subtraces, we investigate whether a trace contains \emph{short} races --- races where the distance between the events involved is bounded. Our approach augments the window with a constant-space summary that captures exactly the out-of-window context required to certify predictive races.

Beyond reducing computational overhead, a key motivation for windowing and our method of detecting short races is that, in practice, racy programs frequently manifest short races. Furthermore, efficient algorithms for detecting such short races can also enhance randomized approaches to race detection~\cite{RPT2023}.
We present a monitoring framework for short-race detection on streaming traces: as the trace is processed online, the monitor maintains a fixed-size window as well as compact summary of the relevant history and updates it incrementally under window advances. We then instantiate the framework for two widely used race notions—happens-before and sync-preserving races. Happens-before races~\cite{lamport1978time} are the oldest and most widely used data race definition. It is defined using the happens-before partial order and underpins industrial tools like \tsan~\cite{threadsanitizer}. Sync-preserving races~\cite{SyncP2021} is a generalization of happens-before with greater predictive power. Both race types can be detected using linear-time, single-pass algorithms, though any linear-time algorithm for detecting sync-preserving races must use linear space~\cite{SyncP2021}. We explore whether these race definitions allow for more efficient detection of short races with reduced overhead. Our results are summarized in \tabref{overview}.

\vspace*{-0.1in}
\input{overview-table}

\vspace*{-0.3in}
Our algorithm for detecting short happens-before races of size $\winsz$ maintains a sliding window of the most recent $\winsz$ events and checks for races within this window. The correctness of this strategy hinges on the observation that whether a pair $(e_1, e_2)$ is ordered by happens-before depends only on the events between them. Our approach builds on \fasttrack~\cite{fasttrack}, the most efficient known HB detector, but modifies it to avoid the limitations of naïve windowing. Simply restricting the search to the last $\winsz$ events—while keeping standard vector-clock semantics intact—does not improve asymptotic space, since timestamps still scale with the full trace length $\trsz$ (e.g., the per-thread maximum number of lock operations), resulting in a $\log \trsz$ dependence. In contrast, our algorithm eliminates this dependence by re-anchoring time to the window and compacting metadata so that both the number and the magnitude of clock components are bounded by $\winsz$ (yielding $\log \winsz$ rather than $\log \trsz$). The algorithm matches {\fasttrack}’s running time but uses less space: whereas \fasttrack’s vector-clock components can grow with the trace length, ours are capped by $\winsz$ through careful bookkeeping and novel data structures. While these results are of theoretical significance, their practical impact may be modest given {\fasttrack}’s already strong scalability.

Detecting short sync-preserving races presents additional challenges that must be overcome. Unlike happens-before races, determining whether a pair $(e_1,e_2)$ forms a sync-preserving race depends not only on the events between $e_1$ and $e_2$ but also on prior memory accesses and synchronization events throughout the trace. This is the reason why the problem of detecting sync-preserving races has a linear space lower bound for any one-pass, streaming algorithm. Consequently, a simple sliding window of the last $\winsz$ events is insufficient for a sound detection algorithm. Our key insight is that soundness can be preserved by maintaining metadata from a select set of events outside the current window, in addition to tracking the sliding window itself. This observation has broader implications, suggesting the potential for efficient algorithms to detect short races across other race definitions. By leveraging carefully designed data structures, we develop an algorithm that detects short sync-preserving races soundly while running asymptotically faster and using significantly less space (see \tabref{overview}).

Our experimental results highlight the potential of short race detection algorithms. While the overhead reductions for happens-before races are not as pronounced as those for sync-preserving races, our experiments show that, in many cases, detecting short races leads to lower running times. Additionally, even with smaller windows, many of the races detected by {\fasttrack} are still reported, showcasing the strong predictive power of these algorithms. We believe short race detection algorithms are likely to be particular significant for sync-preserving races and this is borne out by our experiments. Our experiment shows that, when bounded to short race lengths, our algorithm yields significant improvements across nearly all aspects—including throughput, race prediction power, running time, and memory usage—thereby substantially improving scalability while preserving strong predictive power.

The rest of the paper is organized as follows. \secref{prelim} introduces basic notation, race definitions, and the problems we tackle in this paper. \secref{frame} introduces our monitoring framework for the problem. Instantiations for happens-before races are discussed in \secref{hb} and for sync-preserving races in \secref{sync-preserve}. Experiments are presented in \secref{experiments} before concluding in \secref{conc}.

%% file: overview-table.tex
\begin{table}[!htbp]
\scriptsize
\centering
\begin{tabular}{|l|l|l|}

\toprule

\multicolumn{1}{|c|}{\textbf{$\racetype$-race}} & \multicolumn{1}{c|}{\textbf{Race Detection}} & \multicolumn{1}{c|}{\textbf{Short Races}} \\

\hline
& & \\
Happens-before & Time: $O(\trsz\numthr)$ & Time: $O(\trsz\numthr)$ \\
 & Space: $O((\numthr+\numvar+\numlk)\numthr\log \trsz)$ & Space: $O(\winsz + (\min(\winsz, \numvar+\numlk)+\numthr)\numthr\log \winsz)$\\
 & & \\
\hline
& & \\
Sync-Preserving & Time: $O(\trsz\numthr^2 + \numacq\numvar\numthr^3)$ & Time: $O(\trsz\numthr(\min(\winsz,\numthr))+\numacq\numthr(\min(\winsz,\numvar\numthr^2)))$\\
 & Space: $O(\numvar\numlk\numthr^3+\trsz\numthr\log\trsz)$ & Space: $O(\min(\winsz,\numvar\numthr)\numlk\numthr^2+(\winsz+\numvar+\numlk)\numthr\log\trsz)$\\
& & \\
\hline

\end{tabular}
\vspace{10pt}
\caption{Overview of results based on the following parameters. $\trsz$ is the size of the trace, $\winsz$ is the length of the short race, $\numthr$ is the number of threads, $\numvar$ is the number of variables, $\numlk$ is the number of locks, and $\numacq$ is the number of acquire events in the trace.}
\tablabel{overview}
\end{table}

%% file: prelim.tex
\section{Preliminaries}
\seclabel{prelim}

\subsection{Programs traces}
\seclabel{traces}

An event $e$ in a concurrent program is represented as a tuple $(t,o)$ where $\ThreadOf{e}=t$ denotes the thread identifier responsible for the event and $\OpOf{e} = o$ signifies the type of operation. In this work, we limit the type of operation $o$ to the following:
(a) read/write access to memory location $x$, (i.e. $o = \rd(x)$ or $o = \wt(x)$), or (b) acquire/release of a lock $\lk$ (i.e. $o = \acq(\lk)$ or $o = \rel(\lk)$).
Traces are sequences of events. We use $E_{\tr}$ to denote the set of events in a trace $\tr$ and $E_{\tr}^t$ to denote the events that are carried out by a particular thread t.
We adopt the convention that the first event in the sequence has index $0$.
For a trace $\tr=e_0e_1e_2....e_{n-1}$ of length $n$, the $i$th event $e_i$ is also denoted as $\tr[i]$ and
its index is denoted as $\ind_{\tr}(e_i)=i$.  
 For an event $e$, we say that $e\in\tr$ if there exists an index $i$ such that $\tr[i]=e$.
 A \emph{subtrace} $\substr{\tr}{i}{j} = e_ie_{i+1}\cdots e_{j}$ is the subsequence of $\tr$ of length $j-i+1$ from index $i$ to index $j$.
When $j \leq i$, we adopt the convention that $\substr{\tr}{i}{j}$ is the empty sequence $\varepsilon$.

For every read event $e=\rd(x)$, let $\lw{\tr}(e)$ denote the last write event on $x$ that occurs before $e$ in $\tr$.  If such an event does not exist, $\lw{\tr}(e)=\bot$. For every acquire and release event, let $\match{\tr}(e)$ denote the pairing release/acquire. Similarly, if such an event does not exist, $\match{\tr}(e)=\bot$. 
  
A trace $w$ is said to be well formed if it respects lock semantics; i.e., no thread acquires a lock held by another thread and releases a lock it does not hold. A subtrace $w'$ is well formed if it is a subtrace of a well-formed trace. Every well-formed trace, by definition, is a well-formed subtrace. If the matching acquire of every release in a subtrace is present then it is a well-formed trace.

 For a pair of events $e_1,e_2$ in a subtrace $\tr$, we write $e_1\trord{\tr}e_2$ if $\ind_{\tr}(e_1)\leq\ind_{\tr}(e_2)$, namely $e_1$ occurs before $e_2$ in $\tr$. If also $\ThreadOf{e_1}=\ThreadOf{e_2}$, we denote it by $e_1\tho{\tr}e_2$. The corresponding irreflexive orders are, respectively, $\stricttrord{\tr}$ and $\stricttho{\tr}$. 
 Two memory access events $e_1,e_2$ that access the same memory location (say) $x$ are \emph{conflicting} if at least one of $e_1,e_2$ is a write and $\ThreadOf{e_1}\neq \ThreadOf{e_2}$.

\subsection{Data Races}
\seclabel{races}

A data race occurs when a thread accesses a shared memory location while another thread is modifying its contents, without proper synchronization. In predictive race detection, the goal is to determine if an observed trace witnesses a data race either in the trace or in an alternate ``reordering'' of the trace. This relies on identifying alternate traces that a program producing the observed trace has. Thus, to define whether a trace has a predictive race, we need the notion of \emph{correct reorderings} that captures a set of traces that must be exhibited by a program producing the observed behavior.

\myparagraph{Correct reordering }{
For a given well-formed trace $\tr$, a well-formed trace $\tr'$ is said to be a \emph{correct reordering} of $\tr$ if (1) $E_{\tr'}\subseteq E_{\tr}$, (2) for any pair of events $e_1, e_2$, if $e_1\tho{\tr} e_2$ and $e_2\in E_{\tr'}$ then $e_1 \in E_{\tr'}$ and $e_1\tho{\tr'} e_2$, and (3) for every read event $e \in E_{\tr'}$, $\lw{\tr}(e)=\lw{\tr'}(e)$.
Correct orderings identify alternate executions of the program that generated the trace $\tr$. A correct reordering $\tr'$ of $\tr$ has a consistent thread order and ensures that each read event observes the effect of the same write event. It is worth observing that any trace $\tr$ is a correct reordering of itself.
}

\myparagraph{Predictive Races}{
The problem of dynamic race prediction asks if there exists a correct reordering of a given trace that witnesses a race. 

To define the presence of a race in a trace, we need the notion of events being enabled. An event $e$ of a well-formed trace $\tr$ is said to be \emph{enabled} in a correct reordering $\tr'$ of $\tr$ if $e \not\in E_{\tr'}$ and for every $e' \tho{\tr} e$, we have $e'\in E_{\tr'}$. Then a pair of conflicting events $e_1,e_2 \in E_{\tr}$ is said to be a \emph{predictive data race} in $\tr$ if there is a correct reordering $\tr'$ of $\tr$ such that $e_1$ and $e_2$ are enabled in $\tr'$.   
}

The problem of predictive race detection is known to be $\np$-complete and fixed-parameter intractable~\cite{Mathur20}. This has prompted many innovative approaches to identify specific classes of predictive races to achieve a balance between efficiency and predictive power. They are broadly classified into those based on partial orders~\cite{fasttrack,shb2018,wcp2017} and those based on restricting the class of correct reorderings~\cite{SyncP2021,Shi2024,PavlogiannisPOPL20}. We look at one important example from each category.  

\myparagraph{Happens-before Data Races}{
The \emph{happens-before partial order} $\hb{\tr}$ of a trace $\tr$ is the smallest partial order on $E_{\tr}$ such that for every pair of events $(e_1,e_2)$:
\begin{enumerate}
    \item if $e_1\tho{\tr}e_2$, then $e_1\hb{\tr}e_2$, and 
    \item if $e_1\trord{\tr}e_2$, $e_1=\rel(\lk)$ and $e_2=\acq(\lk)$ for some lock $\lk$ then $e_1\hb{\tr}e_2$.
\end{enumerate}
A subtrace $\tr$ is said to have a \emph{happens-before race} if there exists a pair of conflicting events $e_1$ and $e_2$ such that neither $e_1\hb{\tr}e_2$ nor $e_2\hb{\tr}e_1$.
}

Happens-before races are \emph{sound races}, i.e. if a trace $\tr$ has a happens-before race, then it has a predictive race. It is, however, important to highlight a subtlety --- even though happens-before races are sound and demonstrate the presence of a predictive race, not every happens-before race is a predictive race. Happens-before data races are one of the most popular types of data races and is the basis of widely used industrial tools such as \tsan~\cite{threadsanitizer}. It has an efficient vector clock based algorithm that is one of the reasons for its popularity. Let us introduce vector clocks and vector timestamps that play an important role in many dynamic data race detection algorithms.

\myparagraph{Vector timestamps and Clocks}{
Vector timestamps and vector clocks are essential concepts widely used to capture and reason about causal relationships between events in distributed or multi-threaded environments. 

A \emph{vector timestamp} is a mapping from $\threads{\tr}$ to $\nats$ that assigns each thread a natural number. It is typically represented as a vector of length $\numthr$, where $\numthr = |\threads{\tr}|$ is the number of threads in $\tr$. In race detection algorithms, vector timestamps are often associated with events with the vector timestamp $V_e$ of an event $e$ representing the downward closed set of events with respect to $\tho{\tr}$ that causally precede $e$; here, each entry $V_e(t)$ stores the index of the last event of thread $t$ that belongs to the set represented by $V_e$. For example, a vector timestamp is an efficient way to represent the set of all events that happen-before another event $e$. The \emph{join} of two vector timestamps $V_1$ and $V_2$ is denoted as $V_1\mx V_2$, and is the pair-wise maximum of the timestamps $\lambda t \max(V_1(t),V_2(t))$. The pointwise order on vector timestamps defines a partial order. Thus, $V_1\cle V_2$ if $\forall t (V_1(t)\leq V_2(t))$. We use $\bot$ to denote the default vector timestamp where every entry has the value $-1$.

A \emph{vector clock} is a variable that stores a vector timestamp. Vector clocks are used by race detection algorithms to store the timestamp of important events as the trace is observed. They play a crucial role in the algorithms for dynamic race detection.
}

The second type of data race that we will focus on in this paper is \emph{sync-preserving data races} which generalize happens-before races. They are obtained restricting attention to sync-preserving correct reorderings.

\myparagraph{Sync-preserving Races}{
A correct reordering $\tr'$ of $\tr$ is said to be a \emph{sync-preserving correct reordering} of $\tr$ if for every two acquire events from $E_{\tr'}$ on the same lock, i.e. $e_1=\acq(\lk)$ and $e_2=\acq(\lk)$, we have $e_1\trord{\tr'}e_2$ if and only if $e_1\trord{\tr}e_2$.
A pair of conflicting events $e_1,e_2$ is said to be a \emph{sync-preserving race} in a well-formed trace $\tr$ if there is a sync-preserving correct ordering $\tr'$ where both $e_1$ and $e_2$ are enabled. By definition, every sync-preserving race is a predictive race.
}

\begin{proposition}[\cite{SyncP2021}]
\proplabel{hb-syncp}
If a well-formed trace $\tr$ has a happens-before race then $\tr$ also has a sync-preserving race.
\end{proposition}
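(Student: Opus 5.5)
The plan is to take a happens-before race in $\tr$ and, from it, extract a race whose witness is simply a prefix of $\tr$. Such a prefix is automatically a sync-preserving correct reordering. The argument only needs the existence of some HB-race pair, not a specific one, so we are free to choose a convenient pair.

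\textbf{Choosing the pair.} Among all HB-race pairs $(e_1,e_2)$ of $\tr$, written so that $e_1 \stricttrord{\tr} e_2$, pick one minimizing $\ind_{\tr}(e_2)-\ind_{\tr}(e_1)$ (or lexicographically minimizing $\ind_{\tr}(e_2)$ first, then the distance). We then aim to show that $e_1$ and $e_2$ are both enabled in the reordering
\[
\tr' \;=\; \text{the projection of } \substr{\tr}{0}{\ind_{\tr}(e_2)-1} \text{ onto } \{e : e \stricttrord{\tr} e_2,\ e_1 \not\hb{\tr} e\}.
\]
That is, we keep every event before $e_2$ except $e_1$ and its HB-successors. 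The set of retained events is downward closed under $\hb{\tr}$. It is therefore closed under thread order, and it keeps every predecessor of $e_2$ and of $e_1$, since $e_1 \not\hb{\tr} e_2$ and HB predecessors of $e_1$ are not HB-successors of $e_1$.

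\textbf{Routine checks.} Next we verify that $\tr'$ is a sync-preserving correct reordering. Thread order holds by downward closure. The relative order of acquires on the same lock is inherited from $\tr$, since $\tr'$ is a subsequence. Well-formedness holds for the following reason. Suppose a release $r$ of lock $\lk$ is dropped while its acquire $a$ is kept. Then every later acquire of $\lk$ in $\tr$ is an HB-successor of $r$, and hence of $e_1$, so it is dropped too; thus no lock is acquired twice. The one exception is an acquire held in the critical section containing $e_2$, which is fine. Also, $e_1,e_2 \notin \tr'$, and all their thread predecessors are present, so both are enabled.

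\textbf{Main obstacle.} The hard step is condition (3): each kept read $r$ must see the same last write. The dangerous case is a write $w$ with $\lw{\tr}(r)=w$ where $w$ is dropped (so $e_1 \hb{\tr} w$) but $r$ is kept. If $\ThreadOf{w}=\ThreadOf{r}$ this is impossible. Otherwise $w$ and $r$ conflict, and $w \not\hb{\tr} r$ because $r$ is not an HB-successor of $e_1$. So $(w,r)$ is itself an HB race. It satisfies $r \stricttrord{\tr} e_2$, and $\ind(w)>\ind(e_1)$, so the pair $(w,r)$ is strictly closer together than $(e_1,e_2)$, contradicting minimality. The analogous case, where a kept write intervenes that was not between them in $\tr$, cannot arise because $\tr'$ is a subsequence of $\tr$. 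I expect getting the minimality measure exactly right to be the delicate point; the fallback is to induct on $\ind(e_2)$, then on the distance.
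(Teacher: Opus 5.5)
Your proof is correct. The paper gives no proof of its own, since the proposition is imported from \cite{SyncP2021}. Your route is the standard one. You pick a minimal HB race and witness it by a subsequence of $\tr$. That subsequence is automatically sync-preserving because it inherits the relative order of all acquires. Minimality is exactly what rescues the last-write condition, and it is genuinely needed: in $[t_1{:}\,\wt(x)]\,[t_1{:}\,\wt(y)]\,[t_2{:}\,\rd(y)]\,[t_2{:}\,\wt(x)]$ the pair on $x$ is an HB race but not a sync-preserving race.

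A few points to tighten.

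First, in the last-write step you can have $w=e_1$, namely when $e_1$ is itself a write to the location read by $r$. So only $\ind_{\tr}(w)\ge\ind_{\tr}(e_1)$ holds. The strict decrease comes from $r\stricttrord{\tr}e_2$, which gives $\ind_{\tr}(r)-\ind_{\tr}(w)<\ind_{\tr}(e_2)-\ind_{\tr}(e_1)$.

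Second, in the well-formedness step, a release $u$ whose matching acquire is kept is dropped for one of two reasons: either $e_1\hb{\tr}u$, or $e_2\stricttrord{\tr}u$. Only in the first case are the later acquires of that lock HB-successors of $e_1$. In the second case they are dropped simply because they come after $e_2$. This second case covers every critical section, of any thread and on any lock, still open at the position of $e_2$, not just one ``containing $e_2$''. Acquires unmatched in $\tr$ need no argument.

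Third, the choice of measure is not delicate. The offending pair satisfies $e_1\trord{\tr}w\stricttrord{\tr}r\stricttrord{\tr}e_2$. So any measure that strictly decreases on strictly nested pairs works, and no induction fallback is needed.

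Finally, the span measure buys more than the lexicographic one. It shows that an HB race of minimum span is itself a sync-preserving race. Hence if $\tr$ has an HB race of span at most $\winsz$, it has a sync-preserving race of span at most $\winsz$. That is the short-race form of the proposition, which is the form relevant to this paper.
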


We now present an alternate characterization of sync-preserving races in terms of ideals or downward closed sets. This characterization, coupled with the use of vector clocks to represent downward-closed sets, forms the basis of the linear-time algorithm to detect sync-preserving races~\cite{SyncP2021}.

\myparagraph{Sync-preserving Closure}{ 
%
For a given well-formed trace $\tr$, a set of events $S\subseteq E_{\tr}$ is \emph{thread order and last-write closed  or TL-closed} if 
\begin{enumerate}
    \item $S$ is downward closed with respect to $\tho{\tr}$. 
    \item For every read event $e\in S$, if $\lw{\tr} (e)\neq \bot$, $\lw{\tr}(e)\in S$.
\end{enumerate}
The TL closure of a set $S$, denoted $\TLClosure{\tr}(S)$, is the smallest TL-closed set that contains $S$.  

Further, $S$ is said to be \emph{sync-preserving closed} if 
\begin{enumerate}
\item $S$ is TL-closed, and
\item for any pair of acquire events $a_1,a_2\in S$, if $a_1$, $a_2$ access the same lock and $a_1\stricttrord{\tr} a_2$, then $\match{\tr}(a_1)\in S$.
\end{enumerate} 
The sync-preserving closure of a set $S$, denoted $\SPIClosure{\tr}(S)$, is the smallest sync-preserving closed set containing $S$.

Let $\prev{\tr}(e)$ denote the last event in $\tr$ performed by thread $\ThreadOf{e}$ before $e$. Then, for an event $e \in E_{\tr}$, every sync-preserving correct ordering of $\tr$ must include all events in $\SPIClosure{\tr}({\prev{\tr}(e)})$ to ensure that $e$ is enabled. Therefore, two conflicting events $e_1$ and $e_2$ are in race if and only if $\SPIClosure{\tr}({\prev{\tr}(e_1),\prev{\tr}(e_2)})$ does not contain $e_1$ nor $e_2$~\cite{SyncP2021}; for events $f_1,f_2$, we use $\SPIClosure{\tr}(f_1,f_2)$ to denote the set $\SPIClosure{\tr}(\{f_1,f_2\})$. We will use $\SPIdeal{\tr}(e_1,e_2)$ to denote the set $\SPIClosure{\tr}({\prev{\tr}(e_1),\prev{\tr}(e_2)})$.
}

\subsection{Predictive Race Detection}
\seclabel{race-detection}

The predictive race detection problem is to check if an observed trace $\tr$ has a predictive race. As noted above, the problem of predictive race detection is known to be intractable and fixed-parameter intractable~\cite{Mathur20}. Therefore, research has focused on developing algorithms to check the existence of special types of predictive races, like happens-before or sync-preserving races. Let $\racetype$ be a type of predictive race. The $\racetype$-race detection problem is as follows.

\myparagraph{$\racetype$-race Detection Problem}{
Given a trace $\tr$, determine if there is a pair of conflicting events $e_1,e_2 \in E_{\tr}$ such that $(e_1,e_2)$ is a $\racetype$-race.
}

For both happens-before and sync-preserving, the race detection problem has one-pass linear-time algorithms. Though both automata-based~\cite{elmas2007goldilocks,ang2024predictivemonitoringstrongtrace} and vector clock-based~\cite{djit,fasttrack,SyncP2021} algorithms exist for both these types of races, the algorithms that scale to large traces in practice are the vector clock algorithms. Details of the vector clock algorithms for happens-before and sync-preserving will be presented in Sections~\ref{sec:fasttrack} and~\ref{sec:syncp}. Even if a race detection problem has a linear time algorithm, in many cases, the overhead may still be high, especially when processing very large traces. For example, any one-pass streaming algorithm that detects the presence of sync-preserving races must use linear space (i.e., space proportional to the length of the trace)~\cite{SyncP2021}. 

\emph{Windowing}~\cite{cp2012,maxcausalmodels} has been proposed as a possible approach to address this challenge. In windowing, the trace is divided into subtraces of length (say) $\winsz$, and the algorithm then only searches for races within each subtrace. The restriction to bounded length subtraces ensures that the overhead of the race detection algorithm is bounded.

In this paper, we generalize the windowing approach by asking whether a trace contains short races—races where the distance between the conflicting events is bounded—rather than restricting attention to fixed subtraces. In practice, racy programs often exhibit short races, and detecting them also underlies randomized algorithms for race detection~\cite{RPT2023}.

\myparagraph{Short $\racetype$-race Detection Problem}{
To formally reason about short races, we first define the notion of a span:
For indices $0\le i \le j < |\tr|$, define the span of $(i,j)$ as the length of the inclusive subtrace $\substr{\tr}{i}{j}$. Thus, 
$
   \spanof{i}{j} \triangleq j-i+1.
$
In particular, $\spanof{0}{1}=2$.
 
Then,  the short $\racetype$-race detection problem asks: given a trace $\tr$ and $\winsz \in \nats$, decide whether there exist conflicting events $e_i,e_j \in E_{\tr}$, with $i<j$, such that $e_i$ and $e_j$ form a $\racetype$-race and $\spanof{i}{j}\leq \winsz$.

}

\begin{remark}[Decision vs.\ reporting]
Although the problem above is stated in decision form, our algorithms target a stronger \emph{reporting} objective: as they scan the trace $\tr$ online, they identify and may report Short-$\racetype$ races witnessed by conflicting pairs within distance~$\winsz$. The decision problem is obtained as a special case by stopping after the first reported race (or, equivalently, by checking whether the reported set is empty). We adopt the reporting objective since dynamic race detectors (including happens-before detectors) are typically run over the entire trace and configured to emit races of interest.
\end{remark}

%% file: short_monitoring.tex
\section{Monitoring Framework for Short Races}
\seclabel{frame}
\input{example1}

A \emph{Short-$\racetype$} race is a $\racetype$-race witnessed by a conflicting pair $(e_i,e_j)$ with
$\spanof{i}{j}\leq \winsz$ in the full trace $\tr$.
We use a sliding window of size $\winsz$ purely as an \emph{online enumeration mechanism}:
any pair with $\spanof{i}{j}\leq \winsz$ must co-occur in the window, so restricting attention to
pairs that appear together in the window does not miss any Short-$\racetype$ witness.
Crucially, although the \emph{witness pair} is local, the predicate “$(e_i,e_j)$ forms a $\racetype$-race”
may depend on execution context outside the window. The challenge is therefore to retain just enough
\emph{out-of-window context} to evaluate the same full-trace $\racetype$ judgment for all future
window-local candidates, while keeping memory bounded by a function of $\winsz$.

We propose a generic monitoring framework that runs a sound-and-complete streaming detector
$A_{\racetype}$ for $\racetype$-races over $\tr$ under bounded memory.
As events arrive, the framework maintains a window of the last $\winsz$ events, and runs the underlying detector
$A_{\racetype}$ while dynamically updating and clearing its history-dependent state to enforce bounded space.
Abstractly, the window serves only to enumerate candidate witness pairs with $\spanof{i}{j}\le \winsz$, whereas
the detector state is pruned on-the-fly so that information irrelevant to future Short-$\racetype$ checks is discarded
as soon as it becomes obsolete.

For each incoming event $e$, the framework performs two operations:

\smallskip
\noindent\textbf{(1) $\activate(e)$.}
Insert $e$ into the window and advance the underlying detector state as if processing the full trace.
In particular, $\activate$ executes $A_{\racetype}$'s usual update logic and checks races involving $e$
against currently live in-window events (i.e., those within span at most $\winsz$).

\smallskip
\noindent\textbf{(2) $\collect(e_{\mathit{old}})$.}
When the window exceeds size $\winsz$, evict the oldest event $e_{\mathit{old}}$ and reclaim all detector
state that is \emph{no longer relevant} for any future Short-$\racetype$ witness.
Upon eviction, $\collect$ prunes the history-dependent state maintained by $A_{\racetype}$:
it discards components that can no longer affect any future Short-$\racetype$ check, while preserving the remaining portion of $A_{\racetype}$'s state as the necessary
out-of-window context.

\smallskip
The framework targets the reporting formulation of short-race detection and is required to be
\emph{sound} and \emph{complete} for Short-$\racetype$ races:
it never reports a false Short-$\racetype$ race, and it reports every Short-$\racetype$ race witnessed by
a conflicting pair $(e_i,e_j)$ with $\spanof{i}{j}\leq \winsz$.

Figure~\ref{fig:framework-mechanics} visualizes the high-level structure of our monitoring framework for short-race detection.
As events are processed online, we maintain a sliding window of size $\winsz$ to enumerate short-witness candidates,
while running the underlying streaming race detector $A$ and maintaining its internal state $M_A$.
The two operations correspond to the two window boundaries: \activate{} updates $M_A$ when a new event enters from the right,
whereas \collect{} evicts the oldest event on the left and prunes the parts of $M_A$ that cannot affect any future in-window check.
The remaining residual state serves as the necessary out-of-window context for subsequent candidates.

Two aspects make instantiation non-trivial.
First, one must characterize \emph{which} parts of $A_{\racetype}$'s state can be safely discarded when events leave
the window, and \emph{which} parts must survive as compact summaries to preserve full-trace judgments for all future
window-local candidates; otherwise, space would again scale with $\trsz$.
Second, one must implement $\collect$ so that enforcing bounded memory does not asymptotically increase the running time
of $A_{\racetype}$.
Our instantiations for happens-before and sync-preserving races address both challenges by deriving race-specific compact
summaries and by integrating boundary maintenance into $A_{\racetype}$'s existing update logic without changing its
asymptotic time complexity.

In the remainder of the paper, we instantiate this framework for both happens-before races and sync-preserving races
(via \syncp{}), obtaining detectors whose space usage depends on $\winsz$ rather than $\trsz$.
These results are summarized in \tabref{overview}, and our experiments demonstrate their effectiveness.

%% file: example1.tex
\begin{figure}[t]
\centering
\begin{tikzpicture}[
  font=\small,
  >=Stealth,
  stream/.style={->, line width=1.05pt, draw=black!80},
  dot/.style={circle, draw=black!80, line width=0.9pt, minimum size=7.2mm, inner sep=0pt},
  pastdot/.style={dot, fill=white},
  olddot/.style={dot, fill=orange!75},
  newdot/.style={dot, fill=cyan!55},
  win/.style={draw=green!50!black, fill=green!8, rounded corners=2pt, line width=0.7pt},
  opbox/.style={
    draw=black!65, fill=white, rounded corners=2pt, line width=0.65pt,
    align=center, inner xsep=12pt, inner ysep=7pt, minimum height=10mm,
    minimum width=3.65cm
  },
  arr/.style={->, line width=0.9pt, draw=black!75},
  tiny/.style={font=\scriptsize, text=black!75},
]

\node at (6.60,2.05) {\bfseries Streaming trace $\sigma$};

\draw[stream] (0.2,1.25) -- (12.9,1.25);

\node[pastdot] at (1.1,1.25) {};
\node[pastdot] at (2.0,1.25) {};
\node[pastdot] at (2.9,1.25) {};

\node[pastdot] at (10.6,1.25) {};
\node[pastdot] at (11.5,1.25) {};

\draw[win] (4.05,0.70) rectangle (9.65,1.80);

\node[olddot] (eold) at (4.60,1.25) {};
\node[pastdot]        at (5.90,1.25) {};
\node[pastdot]        at (7.15,1.25) {};
\node[pastdot]        at (8.30,1.25) {};
\node[newdot] (enew) at (9.05,1.25) {};

\node[opbox] (col) at (4.95,-0.05)
  {{\bfseries Collect}\\[-2pt]\tiny evict $e_{\mathit{old}}$;\ prune};

\node[opbox] (act) at (8.55,-0.05)
  {{\bfseries Activate}\\[-2pt]\tiny ingest $e_{\mathit{new}}$;\ update $M_A$};

\coordinate (bendL) at (4.60,0.98);
\coordinate (bendR) at (9.05,0.98);

\draw[arr] (eold.south) -- (bendL) |- (col.north);
\draw[arr] (enew.south) -- (bendR) |- (act.north);

\coordinate (stSW) at (1.45,-2.80);
\coordinate (stNE) at (11.95,-1.55);
\draw[draw=black!65, line width=0.65pt, rounded corners=2pt, fill=black!1] (stSW) rectangle (stNE);

\node at (6.70,-1.38) {\bfseries Detector state $M_A$};

\path[fill=black!2, draw=black!35, line width=0.5pt, dashed, rounded corners=1.5pt]
  (1.90,-2.55) rectangle (6.35,-1.80);
\node[tiny] at (4.13,-2.18) {pruned / discarded};

\path[fill=cyan!10, draw=black!25, line width=0.5pt, rounded corners=1.5pt]
  (6.45,-2.55) rectangle (11.50,-1.80);
\node[tiny] at (8.98,-2.18) {retained (relevant) state};

\coordinate (bendCol) at (4.95,-0.72);
\coordinate (bendAct) at (8.55,-0.72);
\coordinate (stTopL)  at (5.25,-1.55);
\coordinate (stTopR)  at (8.15,-1.55);

\draw[arr] (col.south) -- (bendCol) -| (stTopL);
\draw[arr] (act.south) -- (bendAct) -| (stTopR);

\end{tikzpicture}
\caption{High-level mechanics of the monitoring framework.
A window of size \(\winsz\) slides over the event stream to enumerate candidates.
When the window advances, \collect{} (left boundary) evicts \(e_{\mathit{old}}\) and prunes history-dependent state
irrelevant to future Short-\(\racetype\) checks.
\activate{} (right boundary) ingests \(e_{\mathit{new}}\) and updates the underlying detector state \(M_A\),
keeping memory bounded by \(O(\winsz)\).}
\label{fig:framework-mechanics}
\end{figure}
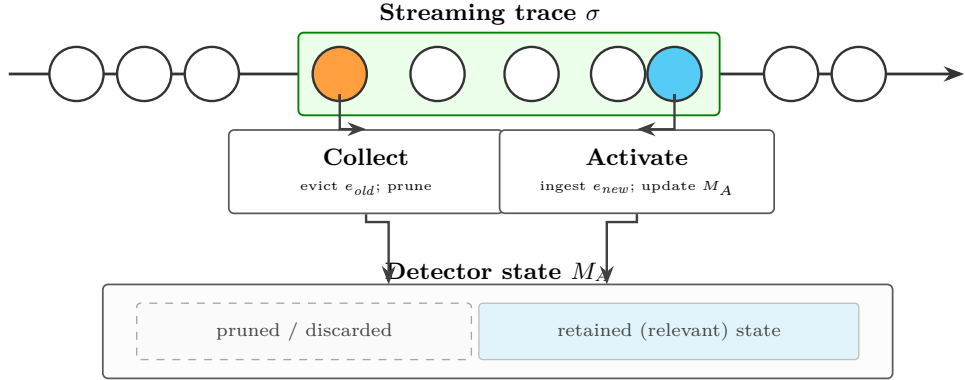

%% file: windowhb.tex
\section{Happens-before Races}
\seclabel{hb}
Happens-before (HB) races are the oldest and most widely used notion of data race. Within the framework of predictive analysis, an HB race can be regarded as the simplest predictive race. HB also remains the standard notion for concurrency testing—where one examines a concrete execution for the presence of a race—and it underpins widely deployed tools such as ThreadSanitizer (TSan)~\cite{threadsanitizer}. A classic and efficient algorithm for sound HB race detection is {\fasttrack}~\cite{fasttrack}, which computes the HB relation using vector clocks. {\fasttrack} is already highly efficient, achieving linear time and logarithmic space—performance. Nonetheless, studying short HB races is theoretically interesting: because HB is the dominant definition of race, clarifying its complexity under restrictions such as short witnesses provides a principled baseline and sharpens our understanding of locality constraints in race detection.

In practice, there are precedents for windowing-style approaches for HB-races that aim to keep {\fasttrack} tractable. For instance, tools like TSan maintain a small, per-location bounded history (“window”) of recent accesses and report a race only if the conflicting pair falls within that window, thereby limiting per-access metadata checks. However, these techniques are primarily heuristic in nature: they do not change the asymptotic complexity of HB race detection, nor are they formulated as algorithms parameterized by an explicit resource bound. Motivated by this observation, we study the \emph{short HB-race} problem from a theoretical perspective and design an algorithm whose complexity is parameterized by the window size~$\winsz$. Our algorithm is sound and complete for all HB-races whose witnesses are at most $\winsz$ events apart, making explicit a principled cost–coverage trade-off that is only implicit in existing tools.

We first note that happens-before (HB) races satisfy a locality property: for any two events in a trace $\tr$, whether they form an HB-race depends only on the events that occur between them.

\begin{proposition}[Locality of HB races]
\proplabel{hb-locality}
Let $\tr = e_0 e_1 \dots e_n$ be a well-formed trace, and suppose it contains a conflicting pair $(e_i, e_j)$ that is an HB-race in $\tr$. Then $(e_i, e_j)$ is also an HB-race in the subtrace $\tr[i:j]$.
\end{proposition}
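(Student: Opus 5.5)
We argue by contraposition: assuming $(e_i,e_j)$ is \emph{not} an HB-race in the subtrace $\tr[i:j]$, we show it is not an HB-race in $\tr$. (Here $\tr[i:j]$ is the subtrace $\substr{\tr}{i}{j}$.) Conflict is a property of the two events alone, so it is preserved. Also, as a subtrace of a well-formed trace, $\tr[i:j]$ is a well-formed subtrace, so $\hb{\tr[i:j]}$ is defined.

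The key step is the inclusion $\hb{\tr[i:j]} \subseteq \hb{\tr}$ restricted to events of $\tr[i:j]$. Both generating conditions of the happens-before order are preserved. If $e\tho{\tr[i:j]}e'$, the events are in the same thread and $e$ precedes $e'$ in $\tr[i:j]$. Since $\tr[i:j]$ is a contiguous segment of $\tr$ with order inherited from $\tr$, we get $e\tho{\tr}e'$. Similarly, if $e=\rel(\lk)$, $e'=\acq(\lk)$ and $e\trord{\tr[i:j]}e'$, then $e\trord{\tr}e'$. Now $\hb{\tr}$, restricted to $E_{\tr[i:j]}$, is a partial order that contains both generating relations of $\tr[i:j]$. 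Since $\hb{\tr[i:j]}$ is the smallest such partial order, it is contained in $\hb{\tr}$.

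Suppose $(e_i,e_j)$ is not an HB-race in $\tr[i:j]$. Then either $e_i\hb{\tr[i:j]}e_j$ or $e_j\hb{\tr[i:j]}e_i$, and by the inclusion the same ordering holds in $\hb{\tr}$. This contradicts the hypothesis that $(e_i,e_j)$ is an HB-race in $\tr$, which proves the proposition.

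There is no real obstacle here. The only point that needs care is to state the smallest-partial-order argument cleanly: $\hb{\tr}$, restricted to the events of the subtrace, is reflexive, antisymmetric and transitive, because restricting a partial order preserves these properties. Note that the proof does not need the converse inclusion $\hb{\tr} \subseteq \hb{\tr[i:j]}$ for events inside the window, which is where the interesting locality content lies.
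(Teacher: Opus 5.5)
Your proof is correct: the generators of $\hb{\tr[i:j]}$ (thread order and release-to-later-acquire within the contiguous segment) are also generators of $\hb{\tr}$, so $\hb{\tr[i:j]}$ is contained in the restriction of $\hb{\tr}$ to $E_{\tr[i:j]}$, and the unorderedness of $(e_i,e_j)$ carries over to the subtrace. The paper gives no written proof of this proposition, only the remark that HB-orderedness depends on the events between $e_i$ and $e_j$, and your argument is the natural formalization of the stated direction; as you note, the converse, which \secref{hb-short} implicitly uses for soundness, also holds, because every generating edge points forward in trace order, so any HB chain from $e_i$ to $e_j$ in $\tr$ stays inside $\tr[i:j]$ (and $e_j \hb{\tr} e_i$ is impossible for $i<j$).
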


This observation suggests that the short HB-race detection problem can be solved via a sliding-window algorithm that searches for HB-races within subtraces of length $\winsz$.

We therefore begin by recalling the {\fasttrack} algorithm, which we will later adapt into our framework to reduce asymptotic space requirements while still detecting short HB-races. Throughout, we fix the following parameters of the trace under analysis: $\trsz$ for trace length, $\numthr$ for the number of threads, $\numvar$ for the number of variables, $\numlk$ for the number of locks, and $\winsz$ for the maximum length of short races considered in \secref{hb-short}.

\rmv{
The solutions we propose to address \probref{prob1} are inspired by classic sliding window algorithms. For a given race definition and a budget $k$, we modify the existing optimal event-based algorithm for that race definition to maintain a sliding window of size $k$ that actively keeps the last $k$ events seen so far. Recall that event-based approaches operate by associating specific information with each event, which is then accessed by future events. The key intuition is that when an event falls outside the window, we discard the associated information. This not only frees up memory but also ensures that subsequent events cannot attempt to declare a race involving the removed event. 
There are two potential challenges we must address for this approach.  (a) For certain race definitions, it may be necessary to analyze more than $k$ events to determine whether two conflicting events, separated by fewer than $k$ events, constitute a race. Namely, we need to ensure that discarding events outside the window does not lead to unsoundness.(b) We also need to efficiently maintain the dynamic association between each piece of information or object and its corresponding event.
In this section, we present our first result: an adaptation of the \fasttrack algorithm to solve \probref{prob1}, ensuring it maintains the same time complexity while eliminates its space complexity dependence on the trace size. 
}

\subsection{The {\fasttrack} Algorithm}
\seclabel{fasttrack}

The \fasttrack algorithm computes a vector timestamp for each event. Define \emph{local time} of an event $e$ as the number of events that have occurred in $\tr$ before $e$ within the same thread as $e$.
\begin{align}
    \ltimeft^\tr(e) = |\setpred{f}{f \stricttho{\tr} e}|
\end{align}
The \emph{causal time stamp} of an event $e$ is subsequently defined as     
\begin{align}
    \ctimeft^\tr(e) = \lambda t \cdot \max \setpred{\ltimeft^\tr(f)}{\ThreadOf{f} = t, f \hb{\tr} e}
\end{align}
Intuitively, the local time ensures that every event has different times. The key property of the causal time stamp is that $e_1\hb{\tr}e_2$ if and only if $\ctimeft^\tr(e_1)\cle \ctimeft^\tr(e_2)$. 
\input{algo-ft}

\fasttrack (\algoref{djitp}) computes $\ctimeft^\tr(e)$ for each event dynamically as the trace unfolds, avoiding the need for a vector clock for every event. Instead, it maintains only a few vector clocks and variables. For each variable $x$, it maintains an epoch $\epch^w_x$ that stores the local time of the last write to $x$; in particular, an epoch has the form $t@c$, where $t$ is a thread identifier and $c$ is its local time. Given a vector clock $\CFT$, the notation $\epch^{w}_x \cle \CFT$ holds iff $c \le \CFT[t]$.

In addition, it has vector clocks that store the local time of each thread's last read on each variable ($\RFT_x$), the causal time of the last release of every lock ($\CFT_\lk$) and the causal time of the last event from each thread ($\CFT_t$). 

\begin{remark}
    The \fasttrack algorithm is typically presented with an optimization where the local clock is incremented only after release events. This optimization does not change the asymptotic complexity of the algorithm. 
\end{remark}

\myparagraph{Analysis}{
\fasttrack allocates $\numthr+\numvar+\numlk$ vector clocks. Since each vector clock contains $\numthr$ entries, and the value of each entry can be as large as the trace length $\trsz$, the algorithm uses $O((\numthr+\numvar+\numlk)\numthr \log \trsz)$ space. For each read or write event, \fasttrack compares the timestamp of the current event with the timestamp of the last access and reports races if necessary. For each acquire and release event, it performs vector join and copy operations to propagate causal information. As a result, the algorithm takes $O(\trsz\numthr)$ time, with each operation (release, acquire, and write) incurring a cost of $O(\numthr)$.
}

\subsection{Detecting Short Happens-before Races}
\seclabel{hb-short}
We now instantiate our monitoring framework for Short-HB races by taking {\fasttrack} as the underlying streaming detector \(A_{\mathrm{HB}}\). By the locality of HB races (Prop.~\ref{prop:hb-locality}), whether a conflicting pair \((e_i,e_j)\) with \(j-i\le \winsz\) forms an HB-race depends only on the subtrace \(\tr[i\!:\!j]\). Consequently, the sliding window of size \(\winsz\) is therefore used purely to enumerate candidate pairs. 

The remaining challenge is purely algorithmic: how should we design the
\(\activate(\cdot)\) and \(\collect(\cdot)\) operators so that the framework improves
both time and space complexity?
A naive approach is to maintain a window of the last \(\winsz\) events and
re-run (or re-simulate) \(A_{\mathrm{HB}}\) from scratch on the window contents after each shift.
Although this reduces the \emph{space} footprint relative to running \(A_{\mathrm{HB}}\) on the full trace,
it introduces a prohibitive \emph{runtime} overhead (essentially an extra factor of \(\Theta(\winsz)\)).

Our solution instead designs \(\activate(\cdot)\) to process each incoming event using the standard
{\fasttrack} update rules, while maintaining the invariant that the detector state only reflects
events currently in the window.
Accordingly, the main technical difficulty lies in \(\collect(\cdot)\):
as events expire from the window, we reclaim event-associated access and synchronization metadata
and update any remaining state that may still reference it, so that the detector's space usage depends
on \(\winsz\) rather than \(\trsz\), without increasing the asymptotic running time of {\fasttrack}.

Abstractly, {\fasttrack} maintains a state that summarizes key events observed so far.
In particular, for each memory location, it stores metadata for the most recent read and write
accesses (per thread) in that location's clock state; and for each thread, it stores in the
thread's vector clock the most recent release from every other thread that the thread has
observed so far.

In our framework for short races, \(\activate(e)\) allocates (or reuses) the metadata record associated with \(e\)
and installs it into the {\fasttrack} state, so that subsequent events in the window can reference it.
Dually, when \(e\) expires, \(\collect(e)\) must (i) remove all references to \(e\)'s record from the current
detector state and (ii) reclaim the record, while doing so in time proportional to the number of \emph{actual}
references.

For read and write events, this is straightforward: their records are referenced only via the per-location
``last-access'' fields, so \(\collect(e)\) can clear the corresponding entry if it still points to \(e\),
and then reclaim the record.

Release events are more subtle. A single release record can be propagated into many threads' vector clocks,
so by the time the release leaves the window, it may be referenced by a large number of vector-clock entries.
A naive \(\collect(e)\) would therefore have to scan all threads' clocks to find and clear such references.
To avoid this cost, we equip \(\activate(\cdot)\) with a transposed bookkeeping structure:
whenever processing an event causes some vector-clock entry to start referencing a release record \(r\),
we additionally register that fact in an index for \(r\).
As a result, when \(\collect(r)\) is invoked at the window boundary, it can enumerate exactly the vector-clock
entries that currently reference \(r\), update them in place, and then safely reclaim \(r\)'s metadata,
without any global traversal.

The discussion above gives the high-level design of \(\activate(\cdot)\) and \(\collect(\cdot)\).
At that level, however, two gaps remain.
First, we have not yet argued \emph{soundness}: after we reclaim metadata for expired events, the maintained state
must still be equivalent to running {\fasttrack} on the current window, so that every reported short race is a
genuine HB-race.
Second, we have not yet achieved the desired \emph{space} bound: even if we store metadata only for window-resident
events, {\fasttrack} still uses \emph{global} local times whose bit-width grows with \(\trsz\).

\myparagraph{Soundness via window-equivalence.}
For HB races, locality (\propref{hb-locality}) means that reporting short races depends only on the current
window, hence our correctness argument reduces to maintaining a \emph{window-equivalence} invariant:
after each update, the maintained state is semantically equivalent (up to renaming of window-relative timestamps) to
running {\fasttrack} on the current window alone.
The only nontrivial aspect is ensuring that \(\collect(\cdot)\) can reclaim window-expired metadata without breaking
this invariant; this is precisely what the transposed bookkeeping mechanism described above guarantees, while avoiding
any global traversal and thus preserving {\fasttrack}'s asymptotic update time.

\myparagraph{Eliminating the residual \(\log \trsz\) dependence.}
We now turn to the complementary \emph{space} goal.
In vanilla {\fasttrack}, each vector-clock component stores a global local time that can grow up to \(\trsz\),
requiring \(\Theta(\log \trsz)\) bits per component.
Since Short-HB races are determined entirely by the current window, we can instead use a window-relative notion of
time and recycle identifiers as the window advances, reducing this cost to \(\Theta(\log \winsz)\) bits per component.
More specifically, we replace the timestamp assigned to each events by their index in the window. To effectively maintain the window, we implement it as a circular array. Let the window indices be $\{0,1,\ldots,\winsz-1\}$, we represent the window as a circular array $W[0..\winsz-1]$ with head $h\in\{0,1,\ldots,\winsz-1\}$.

Formally, let \(\window = (W, h)\) be a window of length \(\winsz\), where
\(h \in \{0,1,\ldots,\winsz-1\}\) is the head position.
For \(i,j \in \{0,1,\ldots,\winsz-1\}\), define \(i \leq_\window j\) if and only if either
(a) \(j=h\), or (b) \(j < h < i\), or (c) \(h < \min(i,j)\) or \(h > \max(i,j)\) and \(i < j\).
The ordering between vector timestamps (\(\cle_\window\)) is the usual pointwise ordering where the ordering on
components is \(\leq_\window\).

All these ideas are incorporated in our algorithm which is shown as \algoref{algo2}. We highlight the notable changes and subtle differences of our algorithm when compared with {\fasttrack} (\algoref{djitp}). The algorithm uses a circular array $\window$ of size $\winsz$ to represent the window. The function \activate{} takes in a new event $e$ and updates $h$. It first invokes the \collect{} function on the event that is being dropped from the window and then calls the corresponding \textbf{handler} on the new event before it adds $e$ to the window. $S_i$ represents the set of threads and locks whose entries are associated with the release event stored at $W[i]$, as discussed above. Intuitively, the \textbf{post-handler} reverts what have happened in \textbf{handler} and, as such, does not affect the asymptotic running time of the algorithm.
\begin{theorem}
    \thmlabel{algo2-correctness}
    \algoref{algo2} correctly solves the short happens-before-race detection problem. It runs in time $O(\trsz\numthr)$ and uses space $O(\winsz + (\min(\winsz,\numvar+\numlk)+\numthr)\numthr\log \winsz)$. 
\end{theorem}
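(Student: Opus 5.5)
The proof has three parts: correctness, running time, and space. For correctness, I would argue by induction on the number of processed events that Algorithm~\ref{alg:algo2} keeps the window-equivalence invariant described above. Suppose the current window holds $\tr[k-\winsz+1 : k]$. The invariant says that, up to the order-preserving renaming of window indices induced by $\leq_\window$, the maintained state equals the state {\fasttrack} would reach on that subtrace. Concretely, this covers the clocks $\CFT_t$, the lock clocks $\CFT_\lk$, the write epochs $\epch^w_x$ and the read clocks $\RFT_x$.

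The inductive step splits into two parts.
\begin{enumerate}
\item \collect{} on the evicted event $W[h]$ must turn a state for $\tr[k-\winsz+1:k]$ into one for $\tr[k-\winsz+2:k]$. Dropping the first event of a subtrace only removes its contribution, so every entry pointing to it must become $\bot$ (value $-1$). For reads and writes, the only such references are the per-location last-access fields. For releases, the set $S_h$ records exactly the thread and lock entries that reference it. The point to check is that no entry refers to the evicted event without being registered, so each \textbf{handler} must register every copy or join of a release entry into $S_i$. I would verify this line by line against the handlers.
\item The handler for the new event is the {\fasttrack} rule, with $\cle$ replaced by $\cle_\window$. Since $\leq_\window$ is exactly the age order of the slots relative to $h$, comparisons coincide with comparisons of global local times.
\end{enumerate}
Given the invariant, {\fasttrack}'s correctness says that at each access $e_j$ a race with an in-window earlier access $e_i$ is reported if and only if $(e_i,e_j)$ is an HB-race in $\tr[i:j]$. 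The locality result, \propref{hb-locality}, then transfers this to an HB-race in $\tr$. For the converse direction I need that HB ordering inside the subtrace implies HB ordering in $\tr$, which is immediate. Every pair with $\spanof{i}{j}\le\winsz$ coexists in the window when $e_j$ arrives, which gives completeness.

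One caveat concerns {\fasttrack}'s last-access epochs. {\fasttrack} reports only against the last write and last reads, and this is also its standard completeness guarantee for the first race per location. I would state correctness relative to the same notion, with the window merely truncating history. This is consistent with \propref{hb-locality}, because it is applied to the subtrace.

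For time, the handlers cost $O(\numthr)$ each, as in {\fasttrack}. Registration in $S_i$ adds $O(1)$ per entry written, so it is absorbed. \collect{} of a release costs $O(|S_h|)$, and each element of $S_h$ was inserted once by a handler, so amortized over the trace the total is $O(\trsz\numthr)$. The circular-array update of $h$ is $O(1)$. Clearing an access field is $O(1)$, or $O(\numthr)$ for a read clock, which is charged to that access.

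Space is where I expect the main obstacle, namely the $\min(\winsz,\numvar+\numlk)$ factor. The window array takes $O(\winsz)$ words, but since events store only thread, operation type and identifiers, I would count them as $O(\winsz)$. Each clock entry is a window index, so it needs $\log\winsz$ bits. Thread clocks take $O(\numthr\cdot\numthr\log\winsz)$. The clocks for variables and locks are allocated lazily, only while some in-window event touches them; the \textbf{post-handler}/\deallocate{} step frees a clock that becomes all-$\bot$. At most $\min(\winsz,\numvar+\numlk)$ of them are live, since each live one is witnessed by a distinct in-window event. The sets $S_i$ hold at most $\numthr+\numlk$ entries per release, but their total size is bounded by the number of live clock entries, which is $O((\min(\winsz,\numvar+\numlk)+\numthr)\numthr)$. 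Thread identifiers of $\log\numthr$ bits are absorbed similarly.

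The delicate step is arguing that a variable or lock clock whose referenced events have all expired is indeed freed, and that it is freed in time charged to some eviction. I would try to show this by keeping a reference count per clock, decremented in \collect{}.
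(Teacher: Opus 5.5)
Your proposal follows the paper's own proof sketch in all three parts. For correctness it uses the window-index invariant on the thread/lock clocks and access metadata (the window-equivalence with \fasttrack) together with HB locality; for time it charges each release post-handler's work to the handler steps that registered entries in $S_h$; and for space it counts the thread clocks, the at most $\min(\winsz,\numvar+\numlk)$ live variable/lock clocks, and the $S_i$ entries.

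Two small repairs are needed. First, \propref{hb-locality} as stated yields only the direction you call immediate. For soundness, argue directly that every HB chain from $e_i$ to $e_j$ moves forward in trace order and therefore lies inside $\tr[i:j]$. Second, your last-access caveat (which the paper glosses over by asserting an exact \emph{iff}) is closed for the decision problem by fixing a short race $(e_i,e_j)$ with $j$ minimal. The per-thread read clocks are exact, so only the write epoch matters. If the recorded last write $e_m$ with $i<m<j$ satisfies $e_m\hb{\tr}e_j$, then $(e_i,e_m)$ is a short race ending earlier, contradicting minimality; otherwise a race is reported at $e_j$.
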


A proof of \thmref{algo2-correctness} is presented in Appendix in the replication package.

\begin{example}[Circular window on a short trace]
\label{ex:circular-window-suffix}
Consider the trace
\[
e_1=[t_1:\wt(y)],\;
e_2=[t_1:\wt(x)],\;
e_3=[t_2:\wt(x)],\;
e_4=[t_2:\wt(y)].
\]
There are two HB-races: $(e_1,e_4)$ with span~4 and $(e_2,e_3)$ with span~2.
Let the window size be $\winsz=2$ with circular head $h\in\{0,1\}$ initialized to $-1$.

\smallskip
Processing $e_1,e_2$ fills the window without eviction, recording
$\epch^w_y=t_1@0$ and $\epch^w_x=t_1@1$. 
When $e_3$ arrives, $e_1$ is evicted and collected, resetting
$\epch^w_y$ to null.
Since there is no happens-before edge between $e_2$ and $e_3$,
the FastTrack write check fails and the detector reports the HB-race
$(e_2,e_3)$.

\smallskip
Next, processing $e_4$ evicts $e_2$.
Because $\epch^w_x$ no longer refers to the evicted write,
no additional race is detected, and $(e_1,e_4)$ is missed due to the
window bound.
\end{example}

\begin{example}[Circular window comparison]
Consider the trace
\[
\dots\;
e_1=[t_1:\wt(x)],\;
e_2=[t_1:\rel(\lk)],\;
e_3=[t_2:\acq(\lk)],\;
e_4=[t_2:\wt(x)].
\]
Assume a circular window of size $4$.
Suppose that when processing $e_1$ the framework stores it in the last cell,
i.e., at index $3$  and stores $e_2,e_3,e_4$ at the first three cells.
The synchronization edge $e_2\hb{\tr} e_3$ propagates to $t_2$ the current window index
of $t_1$'s latest event, namely $\CFT_{t_2}[t_1]=0$ (the index of $e_2$).
When processing $e_4$, the head is $h=2$.
Although the last write on $x$ is stored as $\epch^w_x=t_1@3$ and
$\CFT_{t_2}[t_1]=0$, the framework does \emph{not} report a race on $x$:
under the window-aware order $\cle_\window$ induced by $h$,
index $0$ is newer than index $3$ in the circular buffer, which means $3 <_\window 0$, so the race check remains sound.
\end{example}

\vspace*{-0.3in}
\input{algo-hb-window}


\rmv{
At the end of the section, we remark that we believe the solution proposed in this section is both conceptually elegant and inspiring, offering valuable insights. However,despite its theoretical appeal, the algorithm is unlikely to scale better than \fasttrack due to the extra amount of work introduced by the  \textbf{post-handler} of every event as well as the increased cost of vector clock operations. In the next section, we will show how to adapt the same idea on to the sync-preserving data race, which in contrast, yields a solution that significantly improves the scalability.
}

%% file: algo-ft.tex

{
\vspace*{0.8cm} 
\begin{algorithm}[h]
\small
\begin{multicols}{2}
\myfun{\init}{
         
	\ForEach{$t \in \threads{}$}{
		$\CFT_t \gets \bot$
	}
    
	\ForEach{$\lk \in \locks{}$}{
		$\CFT_\lk \gets \bot$
	}
	\ForEach{$x \in \vars{}$}{
		$\epch^{w}_x \gets null@0$ ;
		$\RFT_x \gets \bot$
	}
}

\myhandler{\rdhandler{$t$, $x$}}{
	$\CFT_t \gets \CFT_t[t \mapsto \CFT_t(t)+1]$ \;

	\lIf{$\epch^{w}_x \not\cle \CFT_t$}{\declare race} 
	$\RFT_x\gets \RFT_x[t \mapsto \CFT_t(t)]$  \;
    
}

\myhandler{\wthandler{$t$, $x$}}{
        $\CFT_t \gets \CFT_t[t \mapsto \CFT_t(t)+1]$ \;
	\lIf{$\RFT_x \not\cle \CFT_t$ or $\epch^{w}_x \not\cle \CFT_t$}{\declare race }
	$\epch^{w}_x\gets t@\CFT_t[t]$ \;
}

\myhandler{\acqhandler{$t$, $\lk$}}{
	$\CFT_t \gets \CFT_t[t \mapsto \CFT_t(t)+1]$\;
	$\CFT_t \gets \CFT_t \mx \CFT_\lk$  \;
}

\myhandler{\relhandler{$t$, $\lk$}}{
	$\CFT_t \gets \CFT_t[t \mapsto \CFT_t(t)+1]$ \;
	$\CFT_\lk \gets \CFT_t$  \; 
}

\vspace*{0.5cm}
\end{multicols}
\normalsize
\caption{\small Vector clock algorithm for detecting \acrhb-races}
\algolabel{djitp}
\end{algorithm}
}

%% file: algo-hb-window.tex
{
\vspace*{0.8cm} 
\begin{algorithm}[t!]

\small
\begin{multicols}{2}
\raggedcolumns
\setlength{\columnsep}{0.7em}
\setlength{\multicolsep}{0pt}

\myfun{\init}{
\tcp*[l]{Convention: all per-location/per-lock maps (e.g., $\RFT_x$, $\epch_x^w$, $\CFT_\lk$) default to $\text{null}$ if unset.}

$ W \gets [\textnormal{null}]^{\winsz}$\;
    $h\gets -1$\;
    \ForEach{$t \in \threads{}$}{
        $\CFT_t \gets \bot$\;
    }
}

\myfun{\textnormal{run($\tr$)}}{
    \ForEach{incoming event $e$ in $\tr$}{
        \textnormal{activate(e)}\;
    }
}

\myfun{\textnormal{activate(e)}}{
    $h\gets (h+1)\mod \winsz$\;
    $e_{\text{out}}\gets W[h]$\;
    \If{$e_{\text{out}}\neq \text{null}$}{
       \textnormal{collect(}$e_{\text{out}}$\textnormal{)}\;
    }
    Invoke \textbf{handler} on $e$\;
    $W[h]\gets e$\;
}

\myfun{\textnormal{collect(}$e_{\text{out}}$\textnormal{)}}{

    \tcp*[l]{Dispatch to the corresponding post-handler of the evicted event}
    \If{$e_{\text{out}}$ is a $\textsf{read}(t,x)$ event}{
        Invoke \textbf{post-handler} \rdhandler{$t,x$}\;
    }
    \If{$e_{\text{out}}$ is a $\textsf{write}(t,x)$ event}{
        Invoke \textbf{post-handler} \wthandler{$t,x$}\;
    }
    \If{$e_{\text{out}}$ is a $\textsf{release}(t,\lk)$ event}{
        Invoke \textbf{post-handler} \relhandler{$t,\lk$}\;
    }
}


\myhandler{\rdhandler{$t$, $x$}}{
    \lIf{$\epch^{w}_x \not\cle_{\window} \CFT_t$}{\declare race }
    \lIf{$\RFT_x=\text{null}$}{
        $\RFT_x\gets \bot$
    }
    $\RFT_x\gets \RFT_x[t \mapsto h]$
}

\myhandler{\wthandler{$t$, $x$}}{
    \lIf{$\RFT_x \not\cle_{\window} \CFT_t$ or $\epch^{w}_x \not\cle_{\window} \CFT_t$}{\declare race }
    $\epch^{w}_x\gets t@h$
}
\columnbreak

\myhandler{\acqhandler{$t$, $\lk$}}{
    \ForEach{$t^*\in \threads{\tr}$}{
        \If{$\CFT_\lk[t^*] >_{\window} \CFT_t[t^*]$}{
            $S_{\CFT_\lk[t^*]}.\textbf{add}(t)$\;
            $S_{\CFT_t[t^*]}.\textbf{remove}(t)$\;
            $\CFT_t[t^*]\gets \CFT_\lk[t^*]$\;
        }
        $S_{\CFT_\lk[t^*]}.\textbf{remove}(\lk)$\;
    }
    $\CFT_\lk\gets \text{null}$\;
}

\myhandler{\relhandler{$t$, $\lk$}}{
    $S_{\CFT_t[t]}.\textbf{remove}(t)$\;
    $S_{h}\gets \emptyset$\;
    $S_{h}.\textbf{add}(t)$\;
    $\CFT_t \gets \CFT_t[t \mapsto h]$\;
    $\CFT_\lk\gets \bot$\;
    \ForEach{$t^*\in \threads{\tr}$}{
        $S_{\CFT_t[t^*]}.\textbf{add}(\lk)$\;
        $\CFT_\lk[t^*]\gets \CFT_t[t^*]$\;
    }
}


\pohandler{\rdhandler{$t$, $x$}}{
    \lIf{$\RFT_x[t]=h$}{
        $\RFT_x[t]\gets -1$
    }
    \lIf{after eviction there is no read access on $x$ in W }{
        $\RFT_x[t]\gets \text{null}$\;
    }
}

\pohandler{\wthandler{$t$, $x$}}{
    \lIf{$\epch^{w}_x = t@h$}{
        $\epch^{w}_x\gets \text{null}$
    }
}

\pohandler{\relhandler{$t$, $\lk$}}{
    \ForEach{$t^*\in S_h$}{
        $\CFT_{t^*}[t]\gets -1$
    }
    \ForEach{$\lk^*\in S_h$}{
        $\CFT_{\lk^*}[t]\gets -1$
    }
    \lIf{$\CFT_\lk[t]=h$}{
        \ForEach{$t^*\in \threads{\tr}$}{
            $S_{\CFT_\lk[t^*]}.\textbf{remove}(\lk)$
        }
        $\CFT_\lk\gets \text{null}$\;
    }
    $S_h \gets \text{null}$\;
}

\vspace*{0.5cm}
\end{multicols}
\normalsize
\caption{\small Detecting short happens-before-races of length $\winsz$}
\algolabel{algo2}
\end{algorithm}
}

%% file: syncPhb.tex
\section{Sync-preserving Races}
\seclabel{sync-preserve}

Sync-preserving races are a general class of data races. As noted earlier, they subsume other race notions studied in the literature, including happens-before (HB) races (see \propref{hb-syncp}) and schedulable-happens-before (SHB) races~\cite{shb2018}: any trace that contains an HB (or SHB) race also contains a sync-preserving race, but the converse does not hold. Unlike HB, which reasons only about program order and synchronization order, sync-preserving races gain predictive power by additionally reasoning about feasible alternative executions induced by read-from relationships.

This additional predictive power comes at a computational cost. Although sync-preserving races admit streaming linear-time algorithms (e.g., \syncp~\cite{SyncP2021} and automata-based approaches~\cite{ang2024predictivemonitoringstrongtrace}), any such algorithm must, in the worst case, use \emph{linear space} (Theorem~3.2 of~\cite{SyncP2021}). Intuitively, the space blow-up is inherent because exploring feasible alternative executions requires maintaining causality information that may involve dependency chains of unbounded length.
In particular, this implies that sync-preserving races are \emph{non-local}, in sharp contrast to HB races.

\begin{proposition}
\proplabel{syncPncp}
There exists a well-formed subtrace \(\tr\) that begins with \(e_1\) and ends with \(e_2\), and well-formed traces
\(\tr_1\) and \(\tr_2\) such that \(\tr\) occurs as a subtrace of both \(\tr_1\) and \(\tr_2\), but \((e_1,e_2)\) is a
sync-preserving race in \(\tr_2\) and not in \(\tr_1\).
\end{proposition}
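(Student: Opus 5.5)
The plan is to prove the proposition with an explicit counterexample. The idea is that the sync-preserving closure of $\prev{\tr}(e_2)$ can be forced, by events that occur \emph{before} $e_1$, to contain an acquire whose matching release lies inside the subtrace. That release then drags $e_1$ into the closure through thread order and a last-write edge. I will fix a subtrace $\tr$ and build two well-formed prefixes, then compare $\SPIdeal{}(e_1,e_2)$ in the two resulting traces, using the characterization from \secref{races}: $(e_1,e_2)$ is a sync-preserving race iff neither $e_1$ nor $e_2$ lies in $\SPIClosure{}(\prev{}(e_1),\prev{}(e_2))$.

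The subtrace has five events. It starts with $e_1=[t_1:\wt(x)]$, followed by $[t_3:\rd(x)]$ and $[t_3:\rel(\lk)]$. It ends with $[t_2:\acq(\lk)]$ and $e_2=[t_2:\wt(x)]$. The events $e_1$ and $e_2$ are conflicting writes from different threads. The two traces are
\[
\tr_1 = [t_3:\acq(\lk)]\,[t_3:\wt(z)]\,[t_2:\rd(z)]\,\tr,
\qquad
\tr_2 = [t_3:\acq(\lk)]\,\tr .
\]
Both are well formed: $t_3$ acquires $\lk$ before releasing it, and $t_2$ acquires $\lk$ only after that release. So $\tr$ occurs as a subtrace of each.

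First consider $\tr_1$. Here $\prev{}(e_1)$ does not exist, since $e_1$ is $t_1$'s first event, and $\prev{}(e_2)=[t_2:\acq(\lk)]$. TL-closure adds $[t_2:\rd(z)]$ by thread order and its last write $[t_3:\wt(z)]$. Thread order then adds $[t_3:\acq(\lk)]$. The set now holds two acquires on $\lk$, and $t_3$'s occurs first in $\tr_1$, so the sync-preserving rule forces $\match{}$ of that acquire, namely $[t_3:\rel(\lk)]$, into the closure. Thread order then adds $[t_3:\rd(x)]$, and its last write $e_1$ follows. Hence $e_1\in\SPIdeal{\tr_1}(e_1,e_2)$, and the pair is not a race in $\tr_1$.

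Now consider $\tr_2$. The closure of $\{[t_2:\acq(\lk)]\}$ is just that event: it has no thread predecessor in $t_2$ and is not a read, so TL-closure adds nothing. It is also the only acquire in the set, so the lock rule never fires. It therefore contains neither $e_1$ nor $e_2$, and $(e_1,e_2)$ is a sync-preserving race in $\tr_2$.

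The main obstacle is ensuring that the only difference lies outside $\tr$ while both traces stay well formed. A lock released inside $\tr$ must have been acquired earlier, so the naive idea of removing that acquire from one prefix fails. The fix is to keep $[t_3:\acq(\lk)]$ in both prefixes and vary only whether some read-from chain pulls it into the closure of $\prev{}(e_2)$; the prefix $[t_3:\wt(z)]\,[t_2:\rd(z)]$ plays that role. The remaining verification is routine: each closure computed above is already closed under both rules, so it is the least closed set, and no other events creep in.
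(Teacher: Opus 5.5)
Your proof is correct and follows the paper's construction. It uses the same skeleton $e_1$, $[t_3:\rd(x)]$, $[t_3:\rel(\lk)]$, $[t_2:\acq(\lk)]$, $\ldots$, $e_2$, both prefixes begin with $[t_3:\acq(\lk)]$, and $\tr_1$ adds a write--read pair on a fresh variable so the closure of $\prev{}(e_2)$ reaches $t_3$'s acquire, which forces its release, then $[t_3:\rd(x)]$, and hence $e_1$; dropping $[t_2:\rel(\lk)]$ is harmless. Having $t_3$ perform the auxiliary write is essential, and here your version improves on the printed one: the paper's $\tr_1$ uses $[t_1:\wt(y)]$, under which $[t_3:\acq(\lk)]$ is never pulled into $\SPIdeal{\tr_1}(e_1,e_2)$, so $(e_1,e_2)$ would still be a sync-preserving race, whereas your $[t_3:\wt(z)]$ makes the counterexample work.
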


\begin{proof}
Consider the subtrace
\[
\tr \;=\;
[t_1\!:\; e_1=\wt(x)]
[t_3\!:\; \rd(x)]
[t_3\!:\; \rel(\lk)]
[t_2\!:\; \acq(\lk)]
[t_2\!:\; \rel(\lk)]
[t_2\!:\; e_2=\wt(x)].
\]
Now consider the well-formed traces
\[
\tr_1 \;=\; [t_3\!:\; \acq(\lk)]\,[t_1\!:\; \wt(y)]\,[t_2\!:\; \rd(y)]\;\cdot\;\tr,
\qquad
\tr_2 \;=\; [t_3\!:\; \acq(\lk)]\;\cdot\;\tr.
\]
We have \(\SPIdeal{\tr_1}(e_1,e_2)\cap \{e_1,e_2\}=\{e_1\}\), hence \((e_1,e_2)\) is \emph{not} a sync-preserving race in
\(\tr_1\).
In contrast, \(\SPIdeal{\tr_2}(e_1,e_2)\cap \{e_1,e_2\}=\emptyset\), which implies that \((e_1,e_2)\) \emph{is} a
sync-preserving race in \(\tr_2\).
\end{proof}
Therefore, a sliding-window algorithm by itself is insufficient for detecting short sync-preserving races:
even if two events co-occur within a window, whether they constitute a sync-preserving race in the full execution
cannot, in general, be determined from the window contents alone.
In this section, we show that sync-preserving races nevertheless admit an efficient short-race detector:
we combine a sliding window with compact, parameter-bounded summaries that retain exactly the out-of-window
information needed to evaluate window-local candidates.
We further demonstrate that the state-of-the-art streaming detector \syncp can be incorporated cleanly into our
framework, yielding a sound and complete algorithm for Short-\syncp{} races whose space usage depends only on the
fixed parameters (and not on \(\trsz\)).
 We begin by presenting a sketch of the \syncp algorithm.

\subsection{The \syncp Algorithm: a state-centric view}
\seclabel{syncp}
Among existing techniques for detecting sync-preserving races, \syncp~\cite{SyncP2021} is the most scalable:
it supports streaming detection and processes traces in linear time.
Automata-based approaches~\cite{ang2024predictivemonitoringstrongtrace}, while elegant and expressive, rely on
state-space constructions whose size is exponential in the relevant parameters, and thus do not scale to large
executions in practice.
Accordingly, we take \syncp as the underlying detector \(A_{\syncp}\) to be incorporated into our framework.

We give a self-contained overview of the \syncp{} algorithm at the level needed for our framework instantiation.
A complete description and all proofs appear in~\cite{SyncP2021}.
Recall from \secref{races} that a conflicting pair \((e_1,e_2)\) is a sync-preserving race iff
\(\SPIdeal{\tr}(e_1,e_2)\cap \{e_1,e_2\}=\emptyset\).
Thus, \syncp{} must (implicitly) compute \(\SPIdeal{\tr}(e_1,e_2)\) for conflicting pairs as the trace is streamed.
The key structural property is that the sets manipulated by \syncp{} (both \(\TLClosure{\tr}(\cdot)\) and \(\SPIdeal{\tr}(\cdot,\cdot)\))
are downward-closed w.r.t.\ the trace order \(\tho{\tr}\), and hence can be represented compactly by vector timestamps.

\myparagraph{Online maintenance of \(\TLClosure{\tr}(e)\).}
As events arrive, \syncp{} maintains a vector timestamp for the \(\TLClosure{\tr}(e)\) of each processed event \(e\).
This is done incrementally using a small set of vector clocks, in particular:
for each memory location \(x\), a clock \(\mathbb{LW}_x\) that stores the vector timestamp representing
\(\TLClosure{\tr}(\ell)\), where \(\ell\) is the most recent write to \(x\) observed so far.

Intuitively, for a read event \(e=\rd(x)\), feasibility requires \(e\) to ``see'' some write to \(x\), and therefore
\(\TLClosure{\tr}(e)\) must include the most recent write summary, i.e., its timestamp is pointwise \(\sqsupseteq \mathbb{LW}_x\).
(Updates for program order and synchronization are handled similarly by joining the appropriate vector timestamps; see~\cite{SyncP2021}.)

\myparagraph{Race detection with $\SPIdeal{\tr}(e_1,e_2)$ predicate.}
To decide whether \((e_1,e_2)\) is a race, \syncp{} must compute \(\SPIdeal{\tr}(e_1,e_2)\), which is defined via a fixed-point
construction (see \secref{races}). Operationally, this fixed point is driven by repeated discovery of same-thread acquire pairs and
insertion of the matching releases. Accordingly, \syncp{} stores two classes of event summaries, both in trace order:
\begin{itemize}[leftmargin=1.5em,itemsep=0.2em,topsep=0.2em]
\item \emph{Critical-section summaries.} For each thread \(t\) and lock \(\lk\), a list \(\cs_{t,\lk}\) whose entries have the form
\((\TLClosure{\tr}(a),\,\TLClosure{\tr}(\match{\tr}(a)))\), where \(a\) ranges over acquire events on \(\lk\) by \(t\).
\item \emph{Access summaries.} For each thread \(t\), access type \(a\in\{\rd,\wt\}\), and location \(x\),
a list \(\access_{t,a,x}\) containing entries \((\TLClosure{\tr}(\prev{\tr}(e)),\,\TLClosure{\tr}(e))\) for each access event \(e\) of type \(a\) on \(x\) by \(t\).
These lists allow \syncp{} to check a newly observed access against relevant prior conflicting accesses.
\end{itemize}

\myparagraph{Incrementality and monotonicity.}
A central reason \syncp{} admits streaming, linear-time processing is that the fixed-point computations for different conflicting pairs
can be shared. \syncp{} exploits monotonicity by maintaining (i) cached vector timestamps that summarize the most recent ideal computed
for each conflict pattern, and (ii) forward-only pointers into the \(\cs\)- and \(\access\)-lists so that each entry is processed only
a bounded number of times.
(We omit the full bookkeeping details here; see~\cite{SyncP2021}.)

\myparagraph{Complexity.}
Overall, \syncp processes a trace \(\tr\) in time
\(O(\trsz\numthr^2 + \numacq\numvar\numthr^3)\),
and uses space
\(O(\numvar\numlk\numthr^3 + \trsz\numthr\log \trsz)\)%
\footnote{The space bound for \syncp reported in~\cite{SyncP2021} contains a small error (as confirmed by the authors).
The bound stated here is the corrected one.}.
In particular, the \(\trsz\numthr\log\trsz\) term reflects that \syncp maintains event-associated vector timestamps whose
components store global local times (hence \(\Theta(\log\trsz)\) bits per component) for a linear number of events.
This inherent linear-space dependence motivates our short-race framework, which seeks to make the maintained state depend
on the short-race parameter \(\winsz\) rather than on the full trace length \(\trsz\).

\subsection{Detecting Short \syncp Races}
\seclabel{syncp-short}

The linear space lowerbound for any streaming algorithm to detect sync-preserving races~\cite{SyncP2021} suggests that efficient algorithms for detecting short sync-preserving races using sub-linear space could have significant benefits. As in the HB case, our goal in instantiating the framework is to preserve---and ideally even improve---the running
time of the underlying \syncp detector, while eliminating its space dependence on the trace length \(\trsz\).
However, as Prop.~\ref{prop:syncPncp} indicates, sync-preserving races are non-local, and deciding whether a
window-local conflicting pair forms a race may require information from outside the window.
Accordingly, we do not aim to eliminate \emph{all} dependence on \(\trsz\); rather, our primary objective is to remove
\emph{linear} dependence on \(\trsz\), allowing at most sublinear auxiliary space beyond the window and the fixed
parameters.

We first observe that sync-preserving races are well-defined on well-formed traces:
in particular, there are no unmatched releases, so the \(\SPIdeal{\cdot}(\cdot,\cdot)\)-based
race predicate is unambiguous.

\begin{proposition}
\proplabel{syncPcp}
Let \(\tr\) be a well-formed trace and let \((e_1,e_2)\) be a conflicting pair in \(\tr\).
For any well-formed trace \(\tr'\) such that \(\tr\) is a subtrace of \(\tr'\), \((e_1,e_2)\) is a sync-preserving
race in \(\tr'\) if and only if \((e_1,e_2)\) is a sync-preserving race in \(\tr\).
\end{proposition}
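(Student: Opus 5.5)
The plan is to compare the two sync-preserving ideals directly. Because $\tr$ is a well-formed trace, every release in $\tr$ has its matching acquire in $\tr$; this is exactly what fails in the subtrace used for \propref{syncPncp}. Write $\tr' = \alpha\,\tr\,\beta$ and let $P = E_\alpha$ be the set of events before $\tr$. Fix the conflicting pair $(e_1,e_2)$ in $\tr$. Let $S=\{\prev{\tr}(e_1),\prev{\tr}(e_2)\}$ and $S'=\{\prev{\tr'}(e_1),\prev{\tr'}(e_2)\}$, where absent predecessors are simply dropped. Note that $S'\setminus S\subseteq P$, since a predecessor missing in $\tr$ can only lie in $\alpha$. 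Let $C=\SPIClosure{\tr}(S)$ and $C'=\SPIClosure{\tr'}(S')$. The goal is to show $C'\cap E_\tr = C$. Then $\{e_1,e_2\}\cap C'=\emptyset$ iff $\{e_1,e_2\}\cap C=\emptyset$, and the race characterization from \secref{races} finishes the proof.

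The first step is to show that $C'$ contains nothing from $\beta$. Every closure step moves weakly backwards in $\tr'$. Thread predecessors and last writes come earlier. For the sync step, take acquires $a_1\stricttrord{\tr'}a_2$ on the same lock. Lock semantics force $\match{\tr'}(a_1)$ to occur before $a_2$, since $a_2$ cannot acquire the lock until $a_1$'s critical section is released. Since $S'$ lies inside $\alpha\tr$, this gives $C'\subseteq P\cup E_\tr$.

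The second step proves $C'\subseteq P\cup C$ by showing that $P\cup C$ is sync-preserving closed in $\tr'$ and contains $S'$.
\begin{itemize}
\item Thread-downward closure: any thread predecessor of an event of $\tr$ lies in $\tr$ or in $P$. If it lies in $\tr$, it is already in $C$ because $C$ is thread-downward closed.
\item Last-write closure: for a read $e\in C$, either $\lw{\tr'}(e)=\lw{\tr}(e)\in C$, or $\lw{\tr}(e)=\bot$ and $\lw{\tr'}(e)\in P$. Reads in $P$ have their last writes in $P$.
\item Sync condition: take $a_1\stricttrord{\tr'}a_2$ on the same lock, both in $P\cup C$. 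If $a_1\in P$, then its match precedes $a_2$, by the first step's argument. That match cannot lie in $\tr$, because well-formedness of $\tr$ puts the acquire of every release of $\tr$ inside $\tr$. Hence the match is in $P$. If $a_1,a_2\in C$, the release of $a_1$ precedes $a_2$ and so lies in $\tr$. Then $\match{\tr'}(a_1)=\match{\tr}(a_1)\in C$.
\end{itemize}

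The third step proves $C\subseteq C'\cap E_\tr$ by showing that $C'\cap E_\tr$ is sync-preserving closed in $\tr$ and contains $S$. This uses the same three facts read in the other direction. Thread predecessors within $\tr$ are kept. A non-$\bot$ $\lw{\tr}(e)$ coincides with $\lw{\tr'}(e)$. For $a_1,a_2\in C'\cap E_\tr$, the match of $a_1$ lies before $a_2$, hence inside $\tr$, and coincides in both traces. One also needs $S\subseteq C'$: each $\prev{\tr}(e_i)$ either equals $\prev{\tr'}(e_i)$ or is absent.

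The step requiring the most care is the sync condition when one acquire lies in $\alpha$ and the other in $\tr$. This is precisely where well-formedness of $\tr$ as a trace, rather than merely as a subtrace, is used. Unmatched acquires in $\tr$, whose releases may sit in $\beta$, also need attention, but they are harmless for the reason above. Together, steps two and three give $C'\cap E_\tr=C$, which yields the proposition.
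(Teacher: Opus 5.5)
The paper states this proposition without proof; the appendix only sketches \thmref{algo2-correctness} and \thmref{algo3-correctness}. There is therefore no argument to compare against. Your proof is correct and fills this gap cleanly.

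Proving the identity $C'\cap E_\tr = C$ is stronger than the race equivalence being asked for. It says membership of every event of $\tr$ in the ideal is the same whether computed in $\tr$ or in $\tr'$. That is the form in which the framework actually uses the result, since the correctness sketch reasons about membership of window events in $\SPIdeal{\tr}(e_1,e_2)$.

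You also correctly locate the one place where well-formedness of $\tr$ as a trace matters. When $a_1$ lies in $\alpha$ and $a_2$ in $\tr$, the release of $a_1$ cannot lie in $\tr$, so it lies in $P$.

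Two small remarks:
\begin{itemize}
\item Your first step is redundant. The second step already shows that $P\cup C$ is sync-preserving closed in $\tr'$ and contains $S'$, which gives $C'\subseteq P\cup C$ directly. This uses the fact that closures are intersections of closed supersets, which is what makes ``smallest'' well-defined.
\item You repeatedly use that $a_1\stricttrord{\tr'}a_2$ on the same lock forces $\match{\tr'}(a_1)$ to exist and to precede $a_2$. This relies on locks being non-reentrant. The paper's phrasing of well-formedness (``no thread acquires a lock held by another thread'') does not literally rule out a thread re-acquiring its own lock. However, the definition of sync-preserving closure already presupposes that $\match{\tr}(a_1)$ exists in this situation, so you should simply state non-reentrancy as a standing assumption.
\end{itemize}
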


Thus, a straightforward baseline is to maintain an unbounded window that, for every \emph{unmatched} acquire, retains
all subsequent events until that acquire is matched by its corresponding release (so that the suffix since the oldest
unmatched acquire is kept).
However, this offers no space savings: if an acquire remains unmatched for a long time, the retained suffix can grow
linearly with the trace, resulting in essentially the same space usage as vanilla \syncp.

Nevertheless, this baseline provides a useful guiding principle for our design:
we aim to maintain, for the events currently in the window, exactly the information that \syncp would maintain about
those events when run on the corresponding well-formed suffix of the execution.
In other words, our state is designed so that, when restricted to window-resident events, it is semantically
equivalent to \syncp's state on the same events, enabling us to judge window-local candidate pairs exactly as \syncp
would in the full trace.
The high-level intuition is that the causality chains induced by read-from edges---which are needed for sync-preserving
race reasoning---can be maintained online in a \emph{compressed} form as vector timestamps.

As a result, we do not need to retain all memory-access events in the unbounded suffix that the baseline would keep
after an unmatched acquire; instead, their effect is summarized by  vector timestamps.

This intuition is reflected in the design of \syncp:
\syncp detects races by evaluating the \(\SPIdeal{\cdot}(\cdot,\cdot)\)-based predicate via vector clock updates: 

As discussed above, it maintains, for each event, a vector-timestamp representation of its closure \(\TLClosure{\tr}(\cdot)\),
and uses these closures to drive the fixed-point computation underlying \(\SPIdeal{\tr}(\cdot,\cdot)\) during race checking.
The \(\TLClosure{\tr}(\cdot)\) closures are maintained in a manner analogous to the thread vector clocks in {\fasttrack}:
\syncp keeps only \(O(1)\) vector clocks per fixed parameter combination (e.g., per thread/location), and each
vector-clock component stores a global local time, hence requiring \(\Theta(\log \trsz)\) bits. This implies that maintaining \(\TLClosure{\tr}(\cdot)\) is not the primary source of inefficiency---in particular,
it is not what drives \syncp's linear-space usage---and we can therefore incorporate this component into our
framework without modification. Moreover, retaining \(\TLClosure{\tr}(\cdot)\) provides exactly the causal summary(a downward-closed causal closure represented as a vector timestamp) we need to account for dependency
chains that extend beyond the current window. 

Consequently, the only out-of-window events that must be stored explicitly are those whose information cannot be
compressed away, namely acquire events whose matching releases have not yet been observed. Intuitively, retaining these unmatched acquires is necessary to ``complete'' the window: it allows us to interpret
the window contents as if they formed a well-formed trace, so that race checks performed within the window have the
same semantics as in the full execution. 
We have now addressed how to \emph{handle} non-locality.
To obtain a genuine improvement, we must still design the framework operators so that the resulting monitor is
(i) \emph{correct} (sound and complete for Short-\syncp{} races), and (ii) \emph{efficient}, removing the linear
\(\trsz\)-space dependence without increasing the asymptotic running time of \syncp.
To this end, we first identify which parts of \syncp's state are event-indexed and must be allocated/reclaimed at the
window boundary.
\input{algo-syncp-window}

\myparagraph{Event-associated view (used by our framework).}
For our purposes, it is convenient to view the \syncp{} state as a collection of \emph{records dynamically associated with events}.
This perspective makes clear which parts of the state ``enter'' with an event and which parts remain referenced by later computation,
which will be crucial for defining boundary operations in our short-race framework.

Let \(e\) be a processed event.
\begin{enumerate}[leftmargin=1.7em,itemsep=0.25em,topsep=0.25em]
\item If \(e=\rd(t,x)\) (resp.\ \(e=\wt(t,x)\)) by thread \(t\),  then \(e\) contributes one access-summary entry
\((\TLClosure{\tr}(\prev{\tr}(e)),\,\TLClosure{\tr}(e))\) appended to \(\access_{t,\rd,x}\) (resp.\ \(\access_{t,\wt,x}\)).
Moreover, if \(e\) is the latest access of its kind by \(t\) on \(x\), then future conflict processing may reference the \emph{list head}
and the corresponding cached ideals for conflicts against other threads.
\item If \(e=\acq(t,\lk)\) then \(e\) contributes a critical-section entry in \(\cs_{t,\lk}\) whose second component is
temporarily \(\textsf{null}\) until the matching release arrives.
\item If \(e=\rel(t,\lk)\), then \syncp{} completes the corresponding critical-section entry by filling in
\(\TLClosure{\tr}(\match{\tr}(e))\), and subsequent ideal computations may reference this completed entry via forward-only pointers.
\end{enumerate}
In addition, \syncp also maintains information associated with fixed parameters rather than by individual events.
Specially, it keeps a vector clock for each thread $t$, each memory location $x$, and each lock $l$.
Unlike event-local records, these per-thread/per-location/per-lock structures are persistent in our framework as they are necessary to conclude short race witness so they are not created or reclaimed by our window-boundary operations.

In summary, \syncp{} can be understood as maintaining a set of event-associated records (access entries and critical-section entries),
together with a small amount of monotone bookkeeping (cached ideals and forward pointers) that incrementally consumes these records.

\algoref{algo3} gives a high-level view of our framework. The framework first initialize a size-\(\winsz\) window as well as the data structures of \syncp that are reserved for the fixed-parameters.

The wrapper functions \allocate{} and \deallocate{} allocate/free event-local records as discussed above.
To make \deallocate{} safe, our implementation of \collect{} uses two internal bookkeeping helpers, \spill{} and \replay{}.
Intuitively, \spill{} extracts the information about unmatched acquires that would otherwise be reachable only
via the event-local list $\cs_{t,\lk}$ upon eviction, and stores it in a separate variable $\Open_\lk$; This enables \deallocate{} to detach and free the pointers from $\cs_{t,\lk}$ without losing the
unmatched-acquire information needed later, so that the standard \syncp{} logic can proceed unchanged.
Later, when a matching release enters the window, \replay{} re-injects the spilled information from $\Open_\lk$ back into
$\cs_{t,\lk}$ so that the standard \syncp{} handler can proceed unchanged and remain sound.

For each incoming event \(e_{\text{new}}\), the \activate{} operator advances sliding window, inserts \(e_{\text{new}}\), identifies the evicted event \(e_{\text{old}}\), and invokes
$\collect{}(e_{\text{old}})$ at the window boundary.

The monitor then invokes the standard \syncp{} handler on \(e_{\text{new}}\) using only window-resident records as well as fixed-parameter records.

\algoref{algo3} achieves the same runtime as \syncp (and even better when $\winsz$ is sufficiently small) while eliminating the linear space dependence on the trace length required by \syncp. The proof is presented in the Appendix.
\begin{theorem}
    \thmlabel{algo3-correctness}
    \algoref{algo3} correctly solves the short sync-preserving-race detection problem. It runs in time $O(\trsz\numthr(\min(\winsz,\numthr))+\numacq\numthr(\min(\winsz,\numvar\numthr^2)))$ and uses space $O(\min(\winsz,\numvar\numthr)\numlk\numthr^2+(\winsz+\numvar+\numlk)\numthr\log\trsz)$. 
\end{theorem}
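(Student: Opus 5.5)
The proof has two parts: correctness by simulation against \syncp, then a resource count.

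\textbf{Correctness.} I would establish an invariant relating the monitor's state after processing the prefix $\tr[0,k]$ to the state of \syncp run on the same prefix. The invariant restricts \syncp's state to the following records:
\begin{itemize}[leftmargin=1.5em,itemsep=0.1em,topsep=0.2em]
\item records for window-resident events, i.e.\ indices in $[k-\winsz+1,k]$;
\item records for acquires whose matching release has not yet been seen;
\item all fixed-parameter clocks, which are kept verbatim, in particular the $\TLClosure{\tr}$-clocks such as $\mathbb{LW}_x$ and the thread clocks.
\end{itemize}
Modulo the spill/replay relocation of unmatched-acquire entries into $\Open_\lk$, the claim is that the monitor's state equals exactly this restriction of \syncp's state.

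The invariant is preserved by induction on $k$.
\begin{itemize}[leftmargin=1.5em,itemsep=0.1em,topsep=0.2em]
\item \activate{} runs \syncp's handler unchanged. The handler reads only fixed-parameter clocks, list heads, and entries that the invariant retains.
\item \collect{} removes only entries of the evicted event. An acquire's entry is removed only after \spill{} has moved it to $\Open_\lk$.
\item \replay{} restores that entry exactly when \syncp would complete it on the matching release.
\end{itemize}

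Race checks then follow from the invariant, \propref{syncPcp}, and the characterization $\SPIdeal{\tr}(e_1,e_2)\cap\{e_1,e_2\}=\emptyset$. Take a pair with $\spanof{i}{j}\le\winsz$. When $e_j$ arrives, $e_i$ is still in the window. The fixed-point computation of $\SPIdeal{\tr}(e_i,e_j)$ needs three kinds of information:
\begin{itemize}[leftmargin=1.5em,itemsep=0.1em,topsep=0.2em]
\item closures of events, which are encoded in the retained $\TLClosure{\tr}$ timestamps, since these are downward closed and carry out-of-window dependency chains;
\item critical-section entries for acquires inside the current ideal;
\item matching releases.
\end{itemize}
Any relevant acquire that precedes the window either was matched before its release left the window, in which case its effect is already folded into the clocks, or it is still open and hence retained in $\Open_\lk$.

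One subtlety needs separate attention: \syncp's forward-only pointers and cached ideals must never point to a deallocated entry. I would argue that when an entry is freed, every pointer past it is advanced to the next surviving entry. By monotonicity of the ideals, skipped entries older than the window cannot change the verdict for any future pair whose first event is in the window. This gives soundness and completeness.

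\textbf{Complexity.} I would count live records.
\begin{itemize}[leftmargin=1.5em,itemsep=0.1em,topsep=0.2em]
\item \emph{Access lists.} At any time there are at most $\min(\winsz,\numvar\numthr)$ nonempty access lists and relevant list heads, each storing $O(\numthr)$-sized clocks.
\item \emph{Cached ideals.} Ideals are cached per pair (list, lock, thread), contributing $O(\min(\winsz,\numvar\numthr)\numlk\numthr^2)$.
\item \emph{Logarithmic term.} The window, the open acquires (at most $\numlk$), and the fixed-parameter clocks ($O(\numvar+\numlk+\numthr)$ of them) each hold $O(\numthr)$ components of $\log\trsz$ bits. This gives the $(\winsz+\numvar+\numlk)\numthr\log\trsz$ term.
\end{itemize}

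For time, each event costs $O(\numthr)$ clock work times the number of conflicting-thread lists consulted. That number is at most $\min(\winsz,\numthr)$, because lists whose entries are all evicted are empty, so the cost is $O(\numthr\min(\winsz,\numthr))$ per event. Each acquire triggers fixed-point pointer advances over at most $\min(\winsz,\numvar\numthr^2)$ live cached ideals, each costing $O(\numthr)$. \spill{}, \replay{} and deallocation are $O(\numthr)$ per event and are charged to it.

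\textbf{Main obstacle.} The hardest step is the pointer and monotonicity argument showing that advancing past freed entries cannot lose a race. I would attack it by proving that every freed entry has a closure already contained in every ideal that is later computed for a window-local pair, or else is irrelevant. I would show this by comparing trace order against the window's left boundary.
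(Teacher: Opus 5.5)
Your overall route is the paper's: simulate \syncp{} on retained records, rely on \spill/\replay{} for open acquires, and count live records. The step that fails is the frame claim in your induction, that \syncp's handler at a window-local pair ``reads only fixed-parameter clocks, list heads, and entries that the invariant retains.'' Its corollary, that a pre-window acquire is either ``folded into the clocks'' or still open in $\Open_\lk$, fails with it. The $\TLClosure{\tr}$ clocks encode only thread-order and last-write edges, never the lock rule of the sync-preserving closure. A critical section lying entirely before the window can still be forced into $\SPIdeal{\tr}(e_1,e_2)$ by a later acquire of the same lock. The closure of its release can then reach a spilled acquire whose release lies inside the window; that release is forced in turn and can pull in $e_1$. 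Nothing retained records this. The lock-rule step is reflected at most in some other conflict pattern's cached ideal, and a pattern checked for the first time starts its fixed point at the heads of the critical-section lists, where the freed entries no longer exist. So monotonicity does not let you ``advance past'' freed entries.

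Concretely, take $\winsz=6$ and the trace $t_3{:}\acq(m)$, $t_3{:}\wt(y)$, $t_A{:}\acq(\lk)$, $t_A{:}\wt(z)$, $t_A{:}\rd(y)$, $t_A{:}\rel(\lk)$, $t_2{:}\acq(\lk)$, $t_2{:}\rd(z)$, $t_2{:}\rel(\lk)$, $e_1=t_1{:}\wt(x)$, $t_3{:}\rd(x)$, $t_3{:}\rel(m)$, $t_2{:}\acq(m)$, $t_2{:}\rel(m)$, $e_2=t_2{:}\wt(x)$. Here $(e_1,e_2)$ has span $6$ and $e_1$ is $t_1$'s first event. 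In the full trace, the closure proceeds as follows:
\begin{enumerate}
\item $\TLClosure{\tr}(\prev{\tr}(e_2))$ contains $t_2{:}\acq(\lk)$ and, via $t_2{:}\rd(z)$, also $t_A{:}\acq(\lk)$, which forces $t_A{:}\rel(\lk)$.
\item Via $t_A{:}\rd(y)$, that release brings in $t_3{:}\acq(m)$, which together with $t_2{:}\acq(m)$ forces $t_3{:}\rel(m)$.
\item Via $t_3{:}\rd(x)$, that release brings in $e_1$.
\end{enumerate}
Hence $(e_1,e_2)$ is not a sync-preserving race. In the monitor, both $\lk$-critical sections have been freed. Neither acquire was spilled, since its release was in the window when it was evicted; only $t_3$'s $m$-entry survives via \spill/\replay. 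The fresh pattern for $x$ between $t_1$ and $t_2$ therefore finds $\TLClosure{\tr}(\prev{\tr}(e_2))$ already closed and reports a false race. The freed entry of $t_A$'s critical section is neither contained in the computed ideal nor irrelevant, which refutes the lemma you planned under ``Main obstacle.''

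The paper's sketch does not get around this either: it reduces correctness to deciding releases inside the window, which is the same assumption. Closing the gap needs more out-of-window context than the $\TLClosure{\tr}$ clocks and $\Open_\lk$, for example the lock-rule effects of closed critical sections occurring after the oldest spilled acquire. That is a change to the algorithm and its space accounting rather than to the proof. Your resource counting otherwise tracks the paper's.
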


%% file: algo-syncp-window.tex
{
\vspace*{0.8cm} 
\begin{algorithm}[htbp]
\small
\begin{multicols}{2}

\myfun{\init}{
\tcp*[l]{Convention: all per-location/per-lock/per-tuple maps default to $\text{null}$ if unset.}
$W \gets [\textnormal{null}]^{\winsz}$\;
$h \gets 0$\;

\tcp*[l]{Spilled open-acquire record per lock: stores the unique open entry \(\TLClosure{\tr}(\acq,\textsf{null})\) if any.}
\ForEach{$\lk \in \locks{}$}{
    $\Open_\lk \gets \text{null}$\;
}

\tcp*[l]{Initialize the underlying \syncp states for every fixed-parameter }

}

\myfun{\textnormal{run($\tr$)}}{
    \ForEach{incoming event $e$ in $\tr$}{
        \textnormal{activate(e)}\;
    }
}


\myfun{\textnormal{activate(e)}}{
    $h \gets (h+1)\mod \winsz$\;
    $e_{\text{out}} \gets W[h]$\;

    \If{$e_{\text{out}}\neq \text{null}$}{
        \textnormal{collect(}$e_{\text{out}}$\textnormal{)}\;
    }

    \tcp*[l]{Replay: if a release enters but its matching acquire is outside the window, complete the window.}
    \lIf{$e$ is a $\rel(t,\lk)$ event $\land\Open_\lk\neq \text{null}$}{
        \textnormal{replay(}$e$\textnormal{)}\;
    }

    \tcp*[l]{Allocate event-local records needed by \syncp for this event.}
        \textnormal{allocate(}$e$\textnormal{)}\;

    \tcp*[l]{Underlying detector step: invoke the standard \syncp handler (unchanged).}
    Invoke \textbf{\syncp.handler} on $e$\;

    $W[h] \gets e$\;
}
\columnbreak

\myfun{\textnormal{spill(}$e_{\text{out}}$\textnormal{)}}{
    \tcp*[l]{spill only when an acquire is evicted while still unmatched.}
    \lIf{$e_{\text{out}}$ is a $\textsf{acquire}(t,\lk)$ event $\land\ \match{\tr}(e_{\text{out}})\notin W$}{
    $\Open_\lk \gets \TLClosure{\tr}(e_{\text{out}}, \textsf{null}).$
        \tcp*[f]{spill the open entry \(\TLClosure{\tr}(e_{\text{out}}, \textsf{null})\) from $\cs_{t,\lk}$ into $\Open_\lk$}
    }
}

\myfun{\textnormal{replay(e)}}{
    \tcp*[l]{Reinsert the spilled open CS-entry from $\Open_\lk$ back into $\cs_{t,\lk}$.}
\tcp*[l]{At the moment a release is processed with its matching acquire out of window, $\cs_{t,\lk}$ has no window-resident open entry.}
    $\cs_{t,\lk}\gets\Open_\lk$\;
    $\Open_\lk\gets\text{null}$ 
    
}

\myfun{\textnormal{collect(}$e_{\text{out}}$\textnormal{)}}{
    \tcp*[l]{(1) Spill unmatched acquires if needed (to preserve well-formedness of future windows).}
    \textnormal{spill(}$e_{\text{out}}$\textnormal{)}\;

    \tcp*[l]{(2) Reclaim event-local records and detach any references from monotone bookkeeping.}
    \textnormal{deallocate(}$e_{\text{out}}$\textnormal{)}\;
}


\myfun{\textnormal{allocate($e$)}}{
\tcp*[l]{Allocate only the event-local records that \syncp would create for $e$ (nodes in $\access$ or $\cs$).}
\tcp*[l]{All persistent components (e.g., $\CFT_t$, $\mathbb{LW}_x$, cached ideals, forward-only pointers) are updated by \syncp.handler.}
}

\myfun{\textnormal{deallocate(}$e_{\text{out}}$\textnormal{)}}{
\tcp*[l]{Free all event-local records associated with $e_{\text{out}}$ that remain in-window (access/CS nodes).}
\tcp*[l]{Also detach any references from cached ideals/pointers so that no dangling pointers remain.}
}

\vspace*{0.8cm}
\end{multicols}
\normalsize
\caption{\small Detecting short sync-preserving races of length $\winsz$ (framework instantiation with \syncp).}
\algolabel{algo3}
\end{algorithm}
}

%% file: evaluation.tex
\section{Experiments}
\seclabel{experiments}

\myparagraph{Overview.}
In this section, we present our experimental evaluation of the two short race algorithms on a standard benchmark suite for race prediction. Since \fasttrack is already highly efficient, our study of short \fasttrack primarily investigates whether its theoretical improvement in space complexity translates into any practical performance gains. By contrast, the \syncp algorithm—although the state-of-the-art solution for predicting sync-preserving races—is known to suffer from intractability even on moderate or small benchmarks (with trace lengths under 1M). We therefore expect the short \syncp algorithm to substantially improve the practicality of predicting sync-preserving races. Consequently, the \textbf{bulk} of this section focuses on analyzing and understanding the behavior of short \syncp. We start by summarizing the results obtained from running the short \fasttrack algorithm, followed by the results for short \syncp algorithm.

\myparagraph{Setup.}
The experiments were conducted on a MacBook Pro equipped with 36GB of RAM and an M3 chip. To ensure reproducibility and consistency across different environments, we utilized a Docker container to encapsulate a Linux-based environment.
We implemented the two algorithms proposed in the paper on top of \rapid in JAVA. \rapid is an offline dynamic analysis framework designed to analyze execution trace logs. It includes a built-in optimized implementation of \fasttrack and provides the only publicly available implementation of \syncp.

\myparagraph{Benchmarks.}
Our experiments use execution traces from~\cite{Shi2024}, comprising \textbf{153 programs in total}:
(i) \textbf{30 Java programs} drawn from multiple benchmark suites, including IBM Contest~\cite{Farchi2003},
DaCapo~\cite{DaCapo2006}, SIR~\cite{SIR2005}, Java Grande Forum~\cite{sen2005detecting}, and several standalone benchmarks;
and (ii) \textbf{123 OpenMP/HPC benchmarks} drawn from OmpSCR~\cite{dorta2005openmp}, DataRaceBench~\cite{liao2017dataracebench},
DataRaceOnAccelerator~\cite{schmitz2019dataraceonaccelerator}, the NAS Parallel Benchmarks~\cite{nasbenchmark}, CORAL~\cite{coral},
ECP proxy applications~\cite{ecp}, and the Mantevo suite.
All traces record accesses to shared variables and synchronizations via acquire/release events on lock objects only.

\myparagraph{Trace characteristics.}
Our benchmark suite contains a total of 153 traces. To better illustrate the results, we filter out 59 traces with fewer than 10M events. Among the remaining 92 traces, the largest contains 900M events, while the average and median trace sizes are 123M and 70M events, respectively. For clarity of presentation, we group the traces into categories: OMPRacer (12), DRACC (8), DataRaceBench (30), HPCBench (34), CoMD (6), and Other (4).

\myparagraph{Configurations and selection of short race lengths.} 
We implemented \algoref{algo2} and \algoref{algo3} on \rapid's \fasttrack and \syncp implementation, referred to as \shortfasttrack and \shortsyncp, respectively. Each tool was run with a 28GB memory limit and a 1-hour time limit.

We use \rapid's \fasttrack implementation as the baseline for \shortfasttrack. However, \rapid's \syncp implementation is highly inefficient, timing out on most benchmarks in this study even under significantly higher resource limits (4 hours and 400GB~\cite{Shi2024}). In contrast, our \shortsyncp implementation includes substantial optimizations, allowing it to complete 37 benchmarks contains more than 10 million events (as well as filtered benchmarks which are under 10 million events) with window size set to 100\% of the trace length, even within our more constrained resource limits. Therefore, we use \shortsyncp with a race length of 100\% of the trace length as the baseline for evaluating \shortsyncp. For the rest of the section, we will refer to this baseline as \syncp. 
For \shortfasttrack, we ran each benchmark trace under two short-race bounds—$3000$ and $30000$ ($3$K, $30$K)—repeating each configuration three times. We chose these race lengths for two reasons. First, prior work \cite{RPT2023}, which samples subtraces to detect races, suggests that a bound of $30$K is broadly effective for detecting happens-before races but does not evaluate a smaller bound such as $3$K; we therefore include $3$K to assess its effectiveness. Second, these bounds allow us to demonstrate potential space savings by storing vector-clock entries in the Java primitive \texttt{short} rather than \texttt{int}.

In contrast, for \shortsyncp, we ran each trace with larger race lengths—$1$ million and $10$ million($1$M and $10$M) —each configuration repeated three times. We chose these trace lengths also for two reasons: (i) prior work~\cite{SyncP2021} reports that sync-preserving races can span long distances; and (ii) after our optimizations, \syncp completes all benchmarks up to $10$M events without out-of-memory (OOM) errors, indicating that running \shortsyncp at this length is feasible. This is also how we recommend developers to select the race length.

\myparagraph{Soundness and completeness.}
As discussed in \secref{frame}, our algorithms and the implementations satisfy the following guarantees.
\emph{Soundness}: every race they report is a genuine HB-race or sync-preserving race.
\emph{Completeness}: for any span bound $w$, they detect every short race whose span is at most $w$.

\myparagraph{\shortfasttrack.}
\input{hbsl_10k30k_group_config_rows}
\tabref{hbtab} presents the results of running \shortfasttrack on the benchmark suites. All configurations successfully complete every benchmark trace within the time limit. We gauge the efficiency of our configurations with the 
running time, number of racy events and location reported, and the maxium heap usage. For every trace, we average the result from the three repeated runs and then aggregate the number by groups. 
Although \algoref{algo2} has the same asymptotic running time as \fasttrack, we expect \shortfasttrack to be slower due to the additional overhead of managing event removal from the window. However, as shown in \tabref{hbtab}, this is not always the case. In the DRACC benchmarks, \shortfasttrack does not show a noticeable improvement over \fasttrack. In contrast, in all the other benchmark suites, the smaller trace length $3$K,  achieves clear gains in running time. However, we also observe that such gains always comes with reduction in race events and race variables reported. 

Our experiments with \shortfasttrack do not provide strong evidence for reducing memory usage across benchmarks. In several cases, the larger trace length configuration ($30$K) uses more heap than \fasttrack. One factor for this scenario is that \fasttrack’s memory space grows linearly with fixed parameters (number of threads, locks, and variables) and only logarithmically with execution length --- more precisely, with the per-thread maximum number of lock operations. Consequently, windowing offers little savings unless those fixed parameters are large or the trace induces heavy per-thread lock operations. In addition, the bookkeeping needed to maintain the window can outweigh its benefits; our current implementation is not yet aggressively optimized for space, which can further amplify this effect. We remark that the memory savings would be more pronounced if we compared against a \fasttrack implementation that stores vector-clock entries as \texttt{long} rather than \texttt{int}. However, in our experiments the largest trace has $900$M events—well within the range of a 32-bit \texttt{int}, so the contrast is less apparent. 

\myparagraph{\shortsyncp.}
{\rapid}’s \syncp implementation is the only publicly available implementation of \syncp and is highly inefficient. As reported by Shi et al.~\cite{Shi2024}, it failed to coverage sufficient event to explore race conditions and even failed to complete the initialization of its data structures on many traces with fewer than one million events. Our implementation of \algoref{algo3} (\shortsyncp) incorporates several optimizations and, with a window size of $100\%$, successfully analyzes a substantially larger fraction of traces than \rapid’s \syncp. However, it still failed to finish the analysis on majority of the traces.
\begin{figure}[t]
  \centering
  \begin{subfigure}[t]{0.49\columnwidth}
    \centering
    \includegraphics[width=\linewidth]{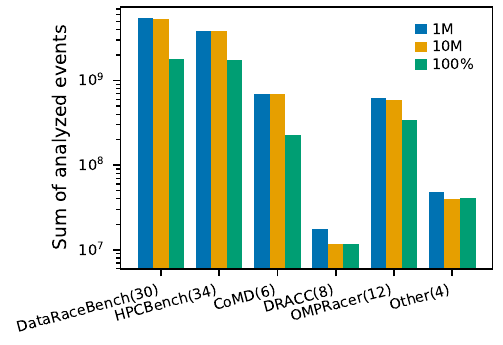}
    \subcaption{Total analyzed events by group (log scale).}
  \end{subfigure}\hfill
  \begin{subfigure}[t]{0.49\columnwidth}
    \centering
    \includegraphics[width=\linewidth]{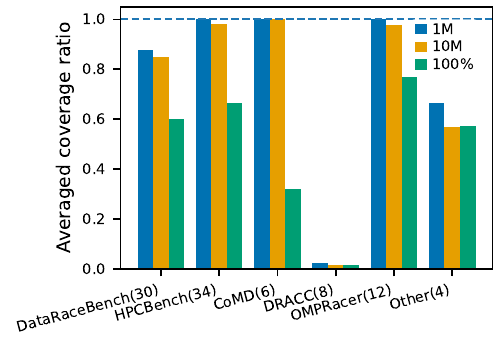}
    \subcaption{Average analyzed/baseline by group (linear).}
  \end{subfigure}
  \caption{Events count across benchmark groups. (a): Total events analyzed(log). (b): Average of normalized analyzed events relative to each trace’s size across each benchmark group.}
  \figlabel{coverage}
\end{figure}

\myparagraph{Trace coverage.}
Therefore we first quantify the improvement in trace coverage (number of events analyzed) achieved by \shortsyncp. In \figref{coverage}, the left panel reports the number of events analyzed per benchmark group, while the right panel shows the mean coverage ratio, defined as the average over traces of (analyzed events ÷ trace length). \shortsyncp with window sizes of 1M and 10M analyzes substantially more events than \syncp and attains a markedly higher coverage on most benchmarks. The two window sizes yield similar coverage because, under our resource limits, both often complete the full analysis on many benchmarks. For the DRACC group, however, windowing offers little improvement in coverage; our investigation suggests this is due to the inherent complexity of \syncp on traces dominated by synchronization events. Overall, we conclude that \shortsyncp significantly improves the coverage. 

\myparagraph{Running time.}
\begin{figure}[t]
  \centering
  \begin{subfigure}[t]{0.49\columnwidth}
    \centering
    \includegraphics[width=\linewidth]{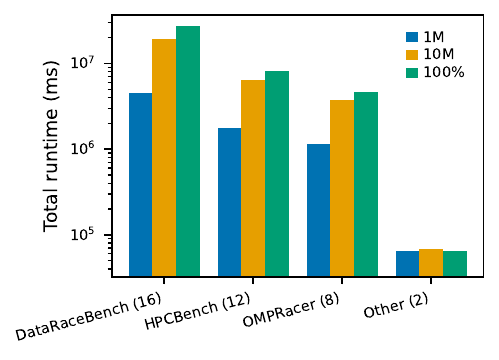}
    \subcaption{Total run time by group (ms).}
  \end{subfigure}\hfill
  \begin{subfigure}[t]{0.49\columnwidth}
    \centering
    \includegraphics[width=\linewidth]{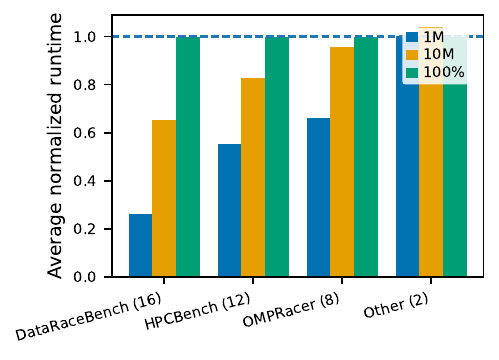}
    \subcaption{Average normalized runtime group.}
    \label{fig:events-norm}
  \end{subfigure}
  \caption{(a): Total running time (log). (b): Average normalized running time relative to the baseline.}
\figlabel{runtime}

\end{figure}
Our analysis suggests that \shortsyncp improves runtime complexity; we now assess its practical impact. \figref{runtime} reports total runtime and averaged normalized runtime per benchmark group (relative to \syncp). For fairness, we exclude (i) the 42 traces where \syncp ran out of memory, to avoid bias from early termination, and (ii) the 14 traces where all configurations timed out (including all eight DRACC traces). Unlike the coverage results, the 1M configuration incurs far less overhead than both 10M and \syncp. On DataRaceBench and HPCBench, its averaged normalized runtime is $26\%$ and $55\%$ of \syncp’s, respectively. The 10M configuration also outperforms \syncp but with higher overhead than 1M. 

\myparagraph{Race prediction power.}
We next examine whether higher coverage translates into higher race exposure. Because benchmarks vary widely in absolute race counts (some traces have up to millions), we report average normalized detection rates for two metrics in \tabref{racytab}: (1) racy pairs detected and (2) racy locations detected. For each trace, we record the counts reported by each configuration, normalize them by the baseline’s counts on the same trace, and then average these ratios across traces from the same group. Notably, the 10M configuration exhibits substantially higher detection power—up to 1000x on both metrics in HPCBench, and 
165x (racy pairs) and 7.2x (racy locations) in DataRaceBench, which comprises most of our traces. The 1M configuration also improves markedly over the baseline, though it generally trails 10M. Therefore, even 1M configuration in general achieves better coverage, the larger window yields greater exposure, indicating that additional temporal context is important for uncovering races.
\input{racy_norm_tables_combined}

\myparagraph{Memory Usage.}
\begin{figure}[t]
  \centering
  \begin{subfigure}[t]{0.49\columnwidth}
    \centering
    \includegraphics[width=\linewidth]{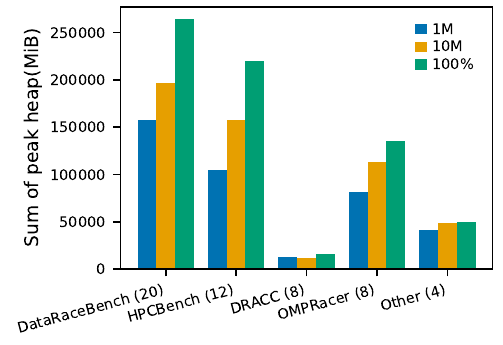}
    \subcaption{Sum of average max heap usage.}
  \end{subfigure}\hfill
  \begin{subfigure}[t]{0.49\columnwidth}
    \centering
    \includegraphics[width=\linewidth]{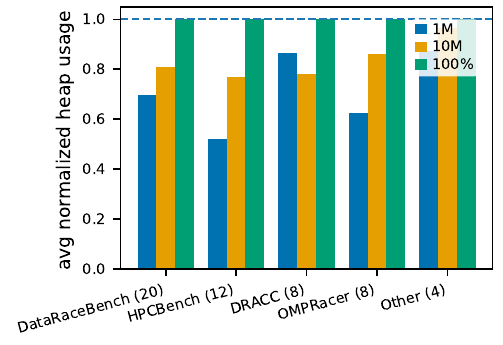}
    \subcaption{Average normalized max heap usage.}
  \end{subfigure}
  \caption{(a): Sum of average max heap usage. (b): Normalized average of analyzed events relative baseline's max heap usage.}
  \figlabel{memory}
\end{figure}
Finally, we evaluate memory usage. Across the 92 benchmarks, the \syncp baseline triggered out-of-memory (OOM) on 42 traces, whereas neither \shortsyncp configuration (1M, 10M) reported any OOMs. This already indicates a substantial memory-efficiency advantage of \shortsyncp over \syncp. For the remaining 50 traces, we quantify memory by logging GC behavior and recording the maximum heap usage for each run. As shown in \figref{memory}, \shortsyncp achieves large reductions in peak heap usage even on traces that are not memory-stressed.


%% file: hbsl_10k30k_group_config_rows.tex
\begin{longtable}{l l r r r r}
\toprule
Benchmark group & Config & Avg time (ms) & Racy events & Racy vars & Max heap (MiB) \\
\midrule
\endfirsthead
\toprule
Benchmark group & Config & Avg time (ms) & Racy events & Racy vars & Max heap (MiB) \\
\midrule
\endhead
\midrule
\multicolumn{6}{r}{\emph{Continued on next page}}\\
\endfoot
\bottomrule
\endlastfoot
\multirow{3}{*}{DataRaceBench} & 3K & 5,377,487 & 64,546,596 & 46,166 & 16,799.5 \\
& 30K & 6,152,631 & 172,952,750 & 51,569 & 24,198.3 \\
& 100\% & 6,169,788 & 172,990,978 & 55,077 & 17,552.2 \\
\addlinespace[2pt]
\multirow{3}{*}{HPCBench} & 3K & 2,844,152 & 4,741,389 & 43,049 & 33,199 \\
& 30K & 3,123,535 & 34,584,174 & 84,672 & 36,828 \\
& 100\% & 4,633,655 & 775,349,288 & 9,289,456 & 36,230 \\
\addlinespace[2pt]
\midrule
\multirow{3}{*}{CoMD} & 3K & 551,748 & 2,148,054 & 8,878 & 3,210.9 \\
& 30K & 618,002 & 15,941,852 & 17,855 & 3,808 \\
& 100\% & 857,893 & 603,504,406 & 67,114 & 3,426.5 \\
\addlinespace[2pt]
\midrule
\multirow{3}{*}{DRACC} & 3K & 629,986 & 401,225,627 & 2,172 & 4,793 \\
& 30K & 696,393 & 401,225,627 & 2,172 & 7,167.7 \\
& 100\% & 609,849 & 401,225,627 & 2,172 & 4,356.7 \\
\addlinespace[2pt]
\midrule
\multirow{3}{*}{OMPRacer} & 3K & 460,955 & 13,239,193 & 21,888 & 8,900.1 \\
& 30K & 517,229 & 64,256,697 & 261,779 & 9,526.5 \\
& 100\% & 761,538 & 332,176,279 & 1,280,965 & 9,821.8 \\
\addlinespace[2pt]
\midrule
\multirow{3}{*}{Other} & 3K & 114,734 & 56,748 & 3,443 & 2,111.9 \\
& 30K & 118,589 & 329,138 & 8,402 & 3,063 \\
& 100\% & 122,657 & 13,967,485 & 9,382 & 2,258.7 \\
\end{longtable}
\captionof{table}{Per–benchmark-group results for race length configurations (10K, 30K, 100\%); 100\% is the {\fasttrack} implementation.
Each metric (running timeRacy events, Racy vars, Max heap ) is the sum across benchmarks in that group, computed after averaging per benchmark across its runs. 
We include only traces whose number of events is at least 10M.}
\tablabel{hbtab}

%% file: racy_norm_tables_combined.tex
\begin{table}[t]
  \centering
  \begin{minipage}{.48\linewidth}
    \centering
    \small \textbf{(a) Averaged racy events ratio }\\[-0.2em]
    \input{racy_events_normalized_tabular.tex}
  \end{minipage}\hfill
  \begin{minipage}{.48\linewidth}
    \centering
    \small \textbf{(b)Averaged racy locations ratio}\\[-0.2em]
    \input{racy_locations_normalized_tabular.tex}
  \end{minipage}
  \caption{Average normalized racy events and racy locations per group (ratio to 100\%).}
  \tablabel{racytab}
\end{table}

%% file: racy_events_normalized_tabular.tex
\begin{tabular}{lcccr}
\toprule
 & 1M & 10M & 100\% & n \\
bench group &  &  &  &  \\
\midrule
DataRaceBench & 54.47x & 165.61x & 1.00x & 30 \\
HPCBench & 37.40x & 1279.43x & 1.00x & 34 \\
CoMD & 3.55x & 3.68x & 1.00x & 6 \\
DRACC & 1.32x & 1.00x & 1.00x & 8 \\
OMPRacer & 1.84x & 1.75x & 1.00x & 12 \\
Other & 1.17x & 0.98x & 1.00x & 4 \\
\bottomrule
\end{tabular}

%% file: racy_locations_normalized_tabular.tex
\begin{tabular}{lcccr}
\toprule
 & 1M & 10M & 100\% & n \\
bench group &  &  &  &  \\
\midrule
DataRaceBench & 7.23x & 7.21x & 1.00x & 30 \\
HPCBench & 36.93x & 1278.72x & 1.00x & 34 \\
CoMD & 1.00x & 1.00x & 1.00x & 6 \\
DRACC & 1.00x & 1.00x & 1.00x & 8 \\
OMPRacer & 0.98x & 1.03x & 1.00x & 12 \\
Other & 1.15x & 0.99x & 1.00x & 4 \\
\bottomrule
\end{tabular}

%% file: relatedwork.tex
\section{Related Work}
\seclabel{relate}
\myparagraph{Static and Dynamic Race detection}{
Data races constitute a prevalent—and notoriously hard-to-detect—class of concurrency errors. The existence of a race, which is commonly defined as a conflicting pair of events that are unordered under happens before, does not only indicates non-deterministic behavior, but also indicates more serious issues including data corruption \cite{boehmbenign2011} and unsound compiler optimization.  A substantial body of research has produced influential static and dynamic approaches for race detection. Static analyses \cite{Naik:2006:ESR:1133255.1134018,racerd2018,flanagan2000type,Abadi:2006:TSL:1119479.1119480,voung2007relay} aim to verify data-race freedom using type-based systems or interprocedural analyses over data and control flow. As verifying concurrent programs is computationally hard, practical static analyzers typically either report a large number of false postives or can miss intensive amount of races. Notably, RacerD~\cite{racerd2018} and its successor~\cite{racerdx2019} have demonstrated improved practical precision, substantially reducing false alarms.
Dynamic analyses operate on concrete executions, so detection is bounded by the behaviors exercised at run time. Beyond lock-based analyses \cite{savage1997eraser} that check lock discipline, dynamic analysis primarily asks whether a given execution contains a race. In this setting, happens-before (HB) remains the canonical notion, underpinning a large body of research \cite{elmas2007goldilocks,djit,fasttrack,Pozniansky:2003:EOD:966049.781529,Fidge:1991:LTD:112827.112860} and industrial tools such as \tsan.
}

\myparagraph{Predictive dynamic race detection}{
A key approach to enhancing dynamic analysis is predictive race detection: given one execution, infer whether some feasible alternative reordering of the same events would expose a race, rather than reasoning only about the observed order. Our work focuses on happens-before (HB) races \cite{lamport1978time} and sync-preserving races \cite{SyncP2021}, which reflect two complementary paradigms: partial-order–based exploration versus computing correct reorderings. HB has been refined to schedulable HB (SHB) \cite{shb2018}, which guarantees that each reported race is realizable in a correct reordering of the trace. Causal Precedence (CP) \cite{cp2012} and Weak Causal Precedence (WCP) \cite{wcp2017} further relax HB while retaining tractable algorithms—polynomial time for CP and linear time for WCP. In a similar spirit, M2 \cite{PavlogiannisPOPL20} and OSR \cite{Shi2024} define race classes discoverable via constrained correct reorderings, akin to \syncp.
}

\myparagraph{Windowing}{
Windowing is a natural technique to reduce the overhead of data race prediction in practice. In windowing, the data race prediction problem aims to detect races within all subtraces of a fixed length \cite{cp2012,maxcausalmodels}. However, as we have shown in this paper, race definitions are not always well-defined on subtraces, which may result in missing certain short races or detecting races only under specific conditions, thereby losing predictive power beyond just event distance. The short race detection problem considered in this paper generalizes windowing by aiming to detect races within a fixed number of events while preserving the correctness of the underlying race definition. Sliding window algorithms~\cite{datar-sliding} extend the windowing paradigm by asking whether a property holds over the most recent $\winsz$ events of a stream. Our work on detecting short races can be seen as a specialization of this idea to the property of races, with the key distinction that we permit certain events outside the window to influence whether a race is detected within it.
}

\myparagraph{Sampling Based Race Detection}{
Sampling-based race detection \cite{marino2009literace,pacer-tool,uclock2024,racemob} reduces overhead by analyzing only a subset of events. Our \shortfasttrack algorithm shares the spirit of \cite{RPT2023}, which samples fixed-length subtraces to detect happens-before races. While \cite{RPT2023} focuses on which subtraces to analyze, our work studies the regime in which every subtrace is sampled and characterizes the inherent complexity of analyzing them. Conceptually, the short-race problem is a special case of subtrace-sampling approaches.

Although the literature on sampling-based race detection is extensive, it has largely targeted happens-before races. Comparatively little attention has been given to sampling for other predictive notions, precisely because designing a sampling scheme for these richer relations is technically difficult: they impose non-local constraints (e.g., synchronization preservation and feasibility of witness reorderings), so naïve subsampling can break soundness, while principled schemes risk prohibitive overhead. In this context, our method for detecting short sync-preserving races lays a foundation for future work along this avenue.

}

%% file: conclusions.tex
\section{Conclusions}
\seclabel{conc}

\sloppy
We introduce and analyze the problem of detecting short races in an observed trace. We present algorithms for detecting short happens-before and sync-preserving races that achieve similar or better running times and require less space compared to classical algorithms for these races. Our experiments highlight the effectiveness of these short race detection algorithms, demonstrating that they can significantly reduce overhead while still identifying many races. In the future, it would be valuable to explore short race detection algorithms for other types of races. Additionally, investigating the impact of such algorithms on randomized race detection, which often relies on identifying short races, would be an interesting avenue for further research.

%% file: appendix.tex
\newpage

\section{Proof Sketch of \thmref{algo2-correctness}}
\applabel{algo2-correctness}

We first prove that \algoref{algo2} correctly detects all happens-before races that spans by at most $k$ events. The key invariant of the algorithm is that every entry in each vector clock maintained by threads and locks records the window index of the most recent release event. Formally, for any thread vector clock $\CFT_t$, at any event $e$ processed by thread $t$, the entry $\CFT_t[t']$ stores the last release performed by thread $t'$ that happens-before $e$. If no such release exists within the window, the entry $\CFT_t[t']$ is set to $-1$. The same logic applies to the lock vector clock $\CFT_\lk$.

Similarly, for each read clock or write epoch, the entry corresponds to the window index of a read or write event within the window. Specifically, $\epch^{w}_x$ stores the window index of the last write to memory location $x$. If no such write event remains in the window, $\epch^{w}_x$ is set to $-1$. This convention also holds for all entries in the read vector clocks $\RFT_x$.

The window comparator $\leq_\window$ compares two window indices $a$ and $b$ with respect to the head position $h$, we have $a\leq_\window b$ if: (a) $a=-1$ or (b) $a\leq b\leq h$ or (c) $h< a\leq b$ or (d) $b \leq h < a$. This together with the way every vector clock gets updated ensures that $\RFT_x \not\cle_{\window} \CFT_t$ or $\epch^{w}_x \not\cle_{\window} \CFT_t$ holds if and only if there are pair of conflicting events in the window that are forms a happens-before races. The completes the correctness proof.

Now we prove the running time complexity and the space complexity analysis are correct. It is straightforward to see that the four handlers of acquire, release read and write events runs with the same running time complexity as those of \fasttrack. The post handler of read and write events also runs in constant time. Although the post-handler of release can access up to $\numthr + \numlk$ many entries, observe that the \textbf{post-handler} for a release event simply ``undoes'' the effect of previous events that led to the local time of the event being stored. By assigning the cost of each step in the \textbf{post-handler} to the acquire event that copied the local time of the release event being dropped, we see that taking the time for the \textbf{handler} of an acquire to be $O(2\numthr)$ accounts for all the time spent in the \textbf{post-handler}s for release events. Thus, the total running time is just $O(\trsz\numthr)$ which is the same as the running time of {\fasttrack}. The memory used by the algorithm is as follows. $O(\winsz)$ space is used to store the events in the window. We have a vector clock for each thread and the number of vector clocks associated with variables and locks is at most $\winsz$ and each entry of a vector timestamp is at most $\winsz$. Next, every element of in a set $S_i$ for a release event corresponds to a component of some vector clock associated with a thread or lock. Thus, the sum of the sizes of $S_i$ for all release events is also at most $O((\min(\winsz,\numlk) + \numthr)\numthr)$. Putting all this together, we see that the total space is $O(\winsz + (\min(\winsz,\numvar+\numlk)+\numthr)\numthr\log \winsz)$.

\section{Proof Sketch of \thmref{algo3-correctness}}
\applabel{algo3-correctness}

We first argue that \algoref{algo3} correctly solves the short sync-preserving race detection problem. To establish this, it is sufficient to demonstrate that when an event $e_2$ from the stream is being processed by the algorithm, for every conflicting event $e_1$ that is within $k$ events of $e_2$, \algoref{algo3} effectively decides for every event $e$ in the window whether $e\in \SPIdeal{\tr}(e_1, e_2)$ .Since \algoref{algo3} effectively simulates the \syncp algorithm, we first refer readers to the technical report \cite{SyncP2021} for detailed background. Recall that \algoref{algo3} computes $\TLClosure{\tr}(e)$ for every event $e$ and maintains them in their respective lists as long as $e$ remains within the window. As established in \cite{SyncP2021}, given $\TLClosure{\tr}(e_1)$ and $\TLClosure{\tr}(e_2)$, computing $\SPIdeal{\tr}({e_1,e_2})$ requires incrementally determining whether each release event $e$ should be included during the fixed point computation. This is done by comparing $\match{\tr}(e)$ with acquires that occur after $e$.

In our case, we need to decide whether each release event within the window should be included. If a release event $e$ has its matching acquire $\match{\tr}(e)$ still within the window, this case is naturally handled by \syncp, as \algoref{algo3} preserves the information of $\match{\tr}(e)$ along with all subsequent acquires, which must also be within the window due to the presence of $e$. When $\match{\tr}(e)$ moves out of the window, \algoref{algo3} retains the necessary information through the function \spill{}, enabling it to invoke \syncp to process the case properly. This ensures the soundness of the approach.

Now to prove that \algoref{algo3} has the same running time as \syncp(or a better running time when $k$ is sufficiently small), we need to show that the four additional functions do not increase the asymptotic complexity. It is straightforward to see that the functions \spill{} and \replay{} operates in constant time as the functions only remove the head node from a list, store the node elsewhere and put the node back to the list(which only happens when the list is empty). The function \allocate($e$) allocates data structures that \syncp would access and operate while processing $e$. For every vector clock allocated, \syncp would perform at least an $O(\numthr)$ operation so allocating a vector clock of size $\numthr$ does not change the running time. It is also straightforward to see that for all other data structures the allocation cost is constant and hence does not change running time either.  The argument for \deallocate is similar to that of \algoref{algo2}, the function merely ``reverts'' operations previously performed by \syncp, ensuring it does not introduce additional overhead. Finally, we note that the running time of \syncp on an execution $\tr$ can be analyzed by counting the number of visits to each node stored in the linked lists. With a similar argument as the analysis in \cite{SyncP2021}, we can show every node is visited at most $\winsz$ times. When $\winsz$ is sufficiently small, \algoref{algo3} establishes a better running time. When $\winsz$ is big (bigger than $\numvar\numthr^2$),for nodes that store information about memory access events, \algoref{algo3} visits them at most as frequently as \syncp, with the visits being bounded by $k$. However, the total cost of \algoref{algo3} when visiting an acquire node may double in the worst case compared to \syncp. This occurs when the acquire node is restored via \replay{} and is treated as a newly inserted node. Nevertheless, this does not affect the overall asymptotic complexity.

The space complexity can be analyzed in the following way: \algoref{algo3} keeps a vector clock for every lock and memory location, which in turn gives a space complexity of $O((\numlk+\numvar)\numthr)$. For a given window of size $\winsz$, it keeps $\winsz$ vector clocks, one for each event, which is $O(\winsz\numthr)$.  Every event $e$ in the window can contribute at most $\numthr^2\numlk$ many pointers, since there can be at most $\numvar\numlk\numthr^3$ many pointers, we have another factor of $O(\min(\winsz,\numvar\numthr)\numlk\numthr^2)$.